\let\emptyset=\varnothing
\newcommand{\powerset}[1]{2^{#1}}
\newcommand{\Arena}{(P,V,\allowbreak (V_p)_{p\in P},\allowbreak v_0,\allowbreak \Delta)}
\newcommand{\Play}{\mathit{Play}}
\newcommand{\Hist}{\mathit{Hist}}
\newcommand{\Pos}{\mathsf{Pos}}
\newcommand{\Strategies}{\mathsf{Str}}
\newcommand{\out}{\mathrm{out}}
\newcommand{\Inf}{\mathit{Inf}}
\newcommand{\Buchi}{\mathop{\mathrm{B\ddot{u}chi}}\nolimits}
\newcommand{\SVER}{\mathit{sVP}}
\newcommand{\SAT}{\mathsf{SAT}}
\newcommand{\VPCKR}{\mathit{VPCKR}}
\newcommand{\calK}{\mathop{\mathcal{K}}\nolimits}
\newcommand{\calMK}{\mathop{\mathcal{MK}}\nolimits}
\newcommand{\calCK}{\mathop{\mathcal{CK}}\nolimits}
\newcommand{\RAT}{\mathit{RAT}}
\newcommand{\Str}{\mathsf{Str}}
\newcommand{\Tpl}[1]{\langle{#1}\rangle}
\newcommand{\Deriv}[3]{{#1}[#2\mapsto #3]}
\newcommand{\classP}{\mathsf{P}}
\newcommand{\PTIME}{\mathrm{PTIME}}
\newcommand{\NP}{\mathrm{NP}}
\newcommand{\coNP}{\text{co}\NP}
\newcommand{\coNEXP}{\text{co}\mathrm{NEXP}}
\newcommand{\classPi}{\mathrm{\Pi}^{\classP}}
\newcommand{\classSigma}{\mathrm{\Sigma}^{\classP}}
\newcommand{\balpha}{{\boldsymbol\alpha}}
\newcommand{\bsigma}{{\boldsymbol\sigma}}
\newcommand{\bolds}{{\boldsymbol s}}
\newcommand{\boldt}{{\boldsymbol t}}
\newcommand{\VPNash}{\mathit{VPNash}}
\newcommand{\aeSAT}{{\forall\exists\mathsf{SAT}}}
\newcommand{\eaSAT}{{\exists\forall\mathsf{SAT}}}
\newcommand{\Out}{\mathop{\mathrm{out}}\nolimits}
\newcommand{\Nash}{\mathop{\mathit{Nash}}\nolimits}
\newcommand{\Win}{\mathop{\mathrm{Win}}\nolimits}
\newcommand{\Muller}{\mathop{\mathrm{Muller}}\nolimits}
\newcommand{\True}{\mathit{true}}
\newcommand{\False}{\mathit{false}}
\let\OL=\overline
\let\emptyset=\varnothing
\title{Verification with Common Knowledge of Rationality for Graph Games}
\author{
	Rindo Nakanishi\inst{1} \and
	Yoshiaki Takata\inst{2} \and
	Hiroyuki Seki\inst{1}
}
\authorrunning{R. Nakanishi et al.}
\institute{
	Graduate School of Informatics, Nagoya University \\
	Furo-cho, Chikusa, Nagoya 464-8601, Japan \\
	\email{\{rindo,seki\}@sqlab.jp}
	\and School of Informatics, Kochi University of Technology \\
	Tosayamada, Kami City, Kochi 782-8502, Japan \\
	\email{takata.yoshiaki@kochi-tech.ac.jp}
}
\begin{document}
\maketitle
	\begin{abstract}
		Realizability asks whether there exists a program satisfying its specification.
In this problem, we assume that each agent has her own objective and behaves rationally to satisfy her objective.
Traditionally, the rationality of agents is modeled by a Nash equilibrium (NE),
where each agent has no incentive to change her strategy
because she cannot satisfy her objective by changing her strategy alone.
However, an NE is not always an appropriate notion for the rationality of agents
because the condition of an NE is too strong;
each agent is assumed to know strategies of the other agents completely.
In this paper, we use an epistemic model to define 
common knowledge of rationality  of all agents (CKR).
We define the verification problem as a variant of the realizability problem,
based on CKR, instead of NE.
We then analyze the complexity of the verification problems for the class of positional strategies.
		\keywords{graph game, epistemic model, common knowledge of rationality}
	\end{abstract}
\section{Introduction}
	A graph game is a formal model for analyzing or controlling  
a system consisting of multiple agents (or processes) that 
behave independently according to their own preferences or objectives.
One of the useful applications of graph game is {\em reactive synthesis},
which is the problem of synthesizing a reactive system 
that satisfies a given specification. 
The standard approach to the problem is as follows \cite{BCJ18}. 
When a specification is given by a linear temporal logic (LTL) formula 
(or nondeterministic $\omega$-automaton) 
$\varphi$, we translate $\varphi$ to 
an equivalent deterministic $\omega$-automaton ${\cal A}$. 
Next, we convert ${\cal A}$ to a tree automaton (or equivalently, a parity game) ${\cal B}$.
Then, we test whether the language recognized by $\mathcal{B}$ is empty, i.e., $L(\mathcal{B})=\emptyset$
(or equivalently, there is a winning strategy for player $0$, which is the system player in $\mathcal{B}$).
The answer to the problem is affirmative if and only if $L({\cal B})\not=\emptyset$, 
and any $t\in L({\cal B})$ (or any winning strategy for the system player in ${\cal B}$)
is an implementation of the specification. 

As described above, reactive synthesis can be viewed as a two-player zero-sum game, in which 
the system player aims at satisfying the specification as her goal (or winning objective)
whereas the objective of the environment player is the negation of the specification. 
If there is a winning strategy for the system player, then any one of them is 
an implementation satisfying the specification. 
However, the assumption that the objective of the environment is antagonistic to 
the system's objective is too conservative;  
usually, the environment behaves based on its own preference or interest. 
Also, the environment often consists of multiple agents, and hence 
a multi-player non-zero-sum game is a more appropriate model than a two-player zero-sum game.
Furthermore, it is natural to require that the system should satisfy the specification 
under the assumption that all players behave {\em rationally}, i.e., 
they behave aiming to satisfy their own objectives. 

{\em Rational synthesis} (abbreviated as RS) asks whether
a given specification is satisfied whenever all the players behave rationally. 
{\em Rational verification} (abbreviated as RV) is the problem asking whether 
every rational strategy profile satisfies a specification. 
RV is defined by adding the rationality assumption on the usual model checking, 
which asks whether every execution of a given model satisfies a specification \cite{CGP01}. 
Note that RV and RS are closely related.  
The answer of RV for a specification $\psi$ is {\em no} if and only if 
the answer of RS for $\neg \psi$ is {\em yes}.  
Namely, RV fails for $\psi$
iff there is a counter-example to $\psi$ (a rational behavior that satisfies $\neg\psi$). 
As described in the related work section below, 
rationality is traditionally captured by Nash equilibrium, 
which is one of the most important concepts in game theory.
We say that a tuple of strategies of all players (called a strategy profile) 
is a {\em Nash equilibrium} (abbreviated as NE)
when no one can improve her own payoff, which is the reward she receives from the game,
by changing her strategy alone.
An NE locally maximizes each player's payoff and hence
each player has no incentive to change her strategy.
From the viewpoint of epistemic game theory, however, 
NE is not always suitable for the concept of rationality
because each player is assumed to know the strategies of the other players.  
(Also see the related work below.)

Epistemic game theory~\cite{Bo15} uses a Kripke frame that consists of 
a set of worlds (or states) $W$ and 
a subset $R_p(w)\subseteq W$ for each player $p$ and a world $w$. 
For a world $w$ and a player $p$, 
$R_p(w)$ represents the set of possible worlds from the viewpoint of $p$ in 
the actual world $w$ of $p$.  
For instance, if $R_p(w) = \{w, w', w''\}$, 
it means that the information given to $p$ in $w$ is incomplete and 
$p$ cannot distinguish $w$ from $w'$ or $w''$.  
Possible world is useful for modelling a situation such that 
each player (or process) cannot know the internal states of the other players
(e.g., the contents of local variables of the other processes). 
An {\em epistemic model} is a pair of a Kripke frame and 
a mapping that associates a strategy profile with each world. 
We say that a player $p$ is rational in a world $w$, 
if for every possible world $w'$ of $p$ in $w$, 
there is no better strategy of $p$ than the one associated with $w$. 
Then, a strategy profile is said to be {\em epistemically rational} if 
there exist an epistemic model $M$ and a world $w$ in $M$ such that 
every player is rational in $w$ and this property is a common knowledge 
among all players. 
(The formal definition of common knowledge is postponed to the next section.)

In this paper, 
we propose a new framework for reactive synthesis and verification, 
by augmenting graph game with epistemic models. 
We then define the rational verification problem based on the proposed model
and present some results on the complexity of the problem.
%
\subsection*{Related work} 
Studies on reactive synthesis has its origin in 1960s and has been one of central topics
in formal methods as well as model checking. 
The problem is EXPTIME-complete
when a specification is given by an $\omega$-automaton \cite{BL69}
and 2EXPTIME-complete 
when a specification is given by an LTL formula \cite{PR89}. 
Rational synthesis (RS) is in 2EXPTIME \cite{FKL10}
when a specification and objectives (in what follows, we refer to as objectives only)
are given as LTL formulas. 
It is PSPACE-complete when objectives of players are restricted to GR(1) \cite{GNPW23}.
The complexity of RS is also studied for $\omega$-regular objectives. 
RS is PTIME-complete with B\"{u}chi objectives, 
NP-complete with coB\"{u}chi, parity, Streett objectives \cite{Um08} 
and PSPACE-complete with Muller objectives \cite{CFGR16}. 
RS has been applied to the synthesis of non-repudiation and fair exchange protocols \cite{KR01,CR12}. 
RS is optimistic in the sense that 
the system first proposes a strategy profile and all environment players will
follow it as far as they do not have profitable deviations. 
For this reason, another type of RS was proposed, called 
non-cooperative rational synthesis (NCRS) in \cite{KPV16}.
NCRS asks whether there is a strategy $s_0$ of the system such that 
every 0-fixed NE (a strategy profile where no environment player 
has a profitable deviation) including $s_0$ satisfies the specification. 
Decidability and complexity of NCRS have also been studied \cite{CFGR16,KPV16,KS22}.

Bruy\`{e}re, et al.\,\cite{BRT22} investigated the complexity of 
rational verification (RV) 
taking Pareto-optimality as the notion of rationality, and show that 
RV is coNP-complete, $\classPi_2$-complete and PSPACE-complete with 
parity, Boolean B\"{u}chi and LTL objectives, respectively. 
Brice, et al.\,\cite{BRB23} considered weighted (or duration) games, 
adopted NE and subgame-perfect equilibrium as the notions of rationality 
and showed that RV is coNP-complete with mean payoff objectives and 
undecidable with energy objectives. 

Epistemic rationality does not always imply NE.
In~\cite{AB95}, Aumann and Brandenburger show an epistemic sufficient condition for NE
in terms of strategic form game (not graph game).
(Also see \cite{P99}.)

As described above, there are already many studies on the decidability and complexity 
of RS and RV. 
However, all of them take NE or its refinement as criteria of rationality. 
This paper is the first step for defining and analyzing RV where 
the rationality is defined in an epistemic way. 
Also, we think that combining an epistemic model with a usual graph game 
enables us to express 
incomplete information of a player in a natural way.

	\subsection*{Outline}
		In Sections~\ref{sec:graph_game} and~\ref{sec:epistemic_model},
we review graph game and epistemic model.
In Section~\ref{sec:VPCKR}, we define
rational verification problems $\VPCKR_S$, $\VPCKR_{\classP,S}$ and $\VPNash_S$.
The problem $\VPCKR_S$ asks whether all strategy profiles over the class $S$ of strategies
satisfy a given specification when CKR holds.
The problem $\VPCKR_{\classP,S}$ is a variant of $\VPCKR_S$, where
the size of an epistemic model is not greater than a polynomial size of a game arena.
The problem $\VPNash_S$ asks whether all NE over the class $S$ of strategies
satisfy a given specification.
Table~\ref{tab:complexity} shows the complexities of these problems.
\begin{table}[t]
    \caption{The complexities of verification problems}
    \centering
    \begin{tabular}{|c|c|c|c|c|}
        \hline
        & $\VPCKR_\Pos$ & $\VPCKR_{\classP,\Pos}$ & $\VPNash_\Pos$ &  $\VPNash_\Strategies$\\
        \hline
        Upper bound & $\coNEXP^{\NP}$ & $\classPi_2$ & \multirow{2}{*}{$\classPi_2$-complete}
            & \multirow{2}{*}{$\mathrm{PSPACE}$-complete~\cite{CFGR16}} \\
        \cline{1-3}
        Lower bound & $\classSigma_2$-hard & $\coNP$-hard &&\\ 
        \hline
    \end{tabular}
    \label{tab:complexity}
\end{table}
$\Str$ is the class of all strategies and $\Pos$ is the class of all positional strategies.
In Section~\ref{sec:conclusion},
we summarize the paper and give future work.
\section{Graph Game}\label{sec:graph_game}
	In this section, we provide basic definitions and notions on graph games, which are needed 
to present our new framework.  
We start with the definition of game arena, winning objective and strategy and so on, 
followed by the definition of Nash equilibrium (NE). 

A graph game is a directed graph with an initial vertex.
Each vertex is controlled by a player who chooses next vertex.
A game is started from the initial vertex.
Then, players repeatedly choose the next vertex according to their strategies.
An infinite sequence of vertices generated by such a process is called a play.
If the play satisfies the winning objective of a player, then she wins.
Otherwise, she loses. 
Note that our setting is non-zero-sum, 
hence it is possible that there are multiple winners.
An NE is a tuple of strategies of all players where each loser cannot become a winner
by changing her strategy alone.

For a binary relation $R \subseteq X \times X$ over a set $X$ and a subset $A\subseteq X$,
we define $R(A) = \{ x \in X \mid (a,x) \in R \wedge a \in A \} \subseteq X$.
	\paragraph{Game arena}
		\begin{definition}
    A game arena is a tuple $G = \Arena$, where
    \begin{itemize}
        \item $P$ is a finite set of players,
        \item $V$ is a finite set of vertices,
        \item $(V_p)_{p\in P}$ is a partition of $V$,
              namely, $V_i \cap V_j = \emptyset$ for all $i\neq j \ (i,j \in P)$ and
              $\bigcup_{p\in P} V_p = V$,
        \item $v_0 \in V$ is the initial vertex, and
        \item $\Delta \subseteq V \times V$ is a set of edges
              where $\Delta(v) \neq \varnothing$ for all $v\in V$.
    \end{itemize}
\end{definition}
	\paragraph{Play and history}
		An infinite sequence of vertices $v_0 v_1 v_2 \cdots \ (v_i \in V, i \geq 0)$
starting from the initial vertex $v_0$ is a \emph{play}
if $(v_i,v_{i+1}) \in \Delta$ for all $i \geq 0$.
A \emph{history} is a non-empty (finite) prefix of a play.
The set of all plays is denoted by $\Play_G$ and the set of all histories is denoted by $\Hist_G$.
We often write a history as $hv$ where
$h \in \Hist \cup \{\varepsilon\}$ and $v \in V$.
For a player $p \in P$, let $\Hist_{G,p} = \{ hv \in \Hist \mid v \in V_p \}$.
That is, $\Hist_{G,p}$ is the set of histories ending with a vertex controlled by player $p$.
We abbreviate $\Play_G$, $\Hist_{G,p}$ and $\Hist_G$
as $\Play$, $\Hist_p$ and $\Hist$ respectively, if $G$ is clear from the context.
For a play $\rho=v_0v_1v_2\cdots \in \Play$,
we define $\Inf(\rho) = \{ v\in V\mid \forall i\geq0.\ \exists j\geq i.\ v_j=v\}$.
	\paragraph{Strategy}
		For a player $p \in P$, a \emph{strategy} of $p$ is a function $s_p: \Hist_p \to V$
such that $(v, s_p(hv)) \in \Delta$ for all $hv \in \Hist_p$.
At a vertex $v\in V_p$,
player $p$ chooses $s_p(hv)$ as the next vertex according to her strategy $s_p$. 
Note that because the domain of $s_p$ is $\Hist_p$,
the next vertex may depend on the whole history in general.
Let $\Strategies_{G,p}$ denote the set of all strategies of $p$.

When a $p$'s strategy $s_p \in \Strategies_{G,p}$ satisfies
$s_p(hv) = s_p(h'v)$ for all $hv,h'v \in \Hist_p$,
we say that $s_p$ is \emph{positional} because the next vertex depends only on the current vertex $v$.
We regard a function $s_p: V_p \to \Delta(V)$
as a $p$'s positional strategy where $s_p(hv) = s_p(v)$ for all $hv \in \Hist_p$.
Let $\Pos_{G,p} \subseteq \Strategies_{G,p}$ denote the set of all positional strategies of $p$.


We abbreviate $\Strategies_{G,p}$ and $\Pos_{G,p}$ as
$\Strategies_p$ and $\Pos_p$ respectively, if $G$ is clear from the context.
	\paragraph{Strategy profile}
		A \emph{strategy profile} is a tuple $\bm{s} = (s_p)_{p \in P}$
of strategies of all players, namely $s_p \in \Strategies_p$ for all $p \in P$.
Let $\Strategies_G$ (resp. $\Pos_G$) be the set of all strategy profiles
(resp. the set of all strategy profiles ranging over positional strategies).
We define the function $\out_G:\Strategies_G \to \Play$ as $\out_G((s_p)_{p \in P}) = v_0 v_1 v_2 \cdots$
where $v_{i+1} = s_p(v_0\cdots v_i)$ for all $i \geq 0$ and for $p\in P$ with $v_i \in V_p$.
We call the play $\out_G(\bm{s})$ the \emph{outcome} of~$\bm{s}$.
We abbreviate $\Strategies_G$, $\Pos_G$ and $\out_G$ as
$\Strategies$, $\Pos$ and $\out$ respectively, if $G$ is clear from the context.
For a strategy profile $\bm{s} \in \Strategies_G$ and
a strategy $s'_p \in \Strategies_p$ of a player $p\in P$,
let $\bm{s}[p \mapsto s'_p] \in \Strategies_G$ denote the strategy profile obtained from $\bm{s}$ by replacing 
the strategy of $p$ in $\bm{s}$ with $s'_p$.
	\paragraph{Objective}
		We assume that the result a player obtains from a play is either a winning or a losing.
Each player has her own winning condition over plays, 
and we represent a winning condition by a subset $O\subseteq\Play$ of plays; 
i.e., the player wins if and only if the play belongs to the subset $O$.
We call the subset $O$ the \emph{objective} of that player.
In this paper, we focus on the following important classes of objectives.
\begin{definition}\label{def:obj}
Let $U\subseteq V$ be a subset of vertices and
$\varphi$ be a Boolean formula whose variables are the vertices of $V$. 
We will use $U$ and $\varphi$ as finite representations for specifying an objective as follows.
\begin{itemize}
	\item B\"{u}chi objective: \\$\Buchi(U)=\{ \rho\in\Play \mid \Inf(\rho) \cap U \neq \varnothing\}$.
	\item Muller objective:
	\\$\Muller(\varphi)=\{ \rho \in \Play \mid
		\varphi\text{ is }\True \text{ under }\theta_\rho\}$ where
		$\theta_\rho$ is the truth assignment defined as $\theta_\rho(v)=\True$ iff $v\in\Inf(\rho)$.
\end{itemize}
\end{definition}
Note that a $\Buchi$ objective is also a Muller objective:
For any $U\subseteq V$, it holds that $\Buchi(U)=\Muller(\bigvee_{u\in U}u)$.

	\paragraph{Objective profile}
		An \emph{objective profile} is a tuple $\bm{\alpha} = (O_p)_{p \in P}$ of objectives of all players,
namely $O_p \subseteq \Play$ for all $p \in P$.
For a strategy profile $\bm{s} \in \Strategies$ and an objective profile $\bm{\alpha} = (O_p)_{p \in P}$,
we define the set $\Win_G(\bm{\alpha},\bm{s}) \subseteq P$ of winners as
$\Win_G(\bm{\alpha},\bm{s}) = \{ p \in P \mid \out_G(\bm{s}) \in O_p \}$.
That is, a player $p$ is a winner if and only if  $\out_G(\bm{s})$ belongs to the objective $O_p$ of $p$. 
If $p \in \Win_G(\bm{\alpha},\bm{s})$, we also say that $p$ wins for $G$ and $\bm{\alpha}$
(by the strategy profile $\bm{s}$). 
Note that it is possible that no player wins the game or all the players win the game. 
In this sense, a game is \emph{non-zero-sum}. 
If an objective profile $\bm{\alpha} = (O_p)_{p \in P}$ is a partition of $\Play$, i.e.,
$O_i \cap O_j = \varnothing$ for all $i\neq j \ (i,j\in P)$ and $\bigcup_{p\in P} O_p = \Play$,
then the game is called \emph{zero-sum}.
When a game is zero-sum, there is one and only one winner and the other players are all losers.
We abbreviate $\Win_G$ as $\Win$ if $G$ is clear from the context.
	\paragraph{Winning strategy}
		Let $S \in \{ \Pos,\Strategies \}$ be a class of strategy profiles.
When the objective of player~$p$ is~$O_p$,
a strategy $s\in S_p$ is called
a \emph{winning strategy} of~$p$ if
it holds that
$\Out(\Deriv{\bolds}{p}{s}) \in O_p$
for every strategy profile $\bolds\in S$.
That is,
$s$~is a winning strategy of~$p$ if
$p$ always wins by taking $s$
regardless of the strategies of the other players.
	\paragraph{Nash equilibrium}
		Let $\bm{\alpha}=(O_p)_{p\in P}$ be an objective profile and
$S \in \{ \Pos,\Strategies \}$ be a class of strategy profiles.
A strategy profile $\bm{s} \in S$ is called a \emph{Nash equilibrium} (NE) for $\bm{\alpha}$ and $S$
if it holds that
$\forall p \in P.\ \forall s_p \in S_p.\ p\in\Win(\bm{\alpha},\bm{s}[p\mapsto s_p]) 
\implies p \in\Win(\bm{\alpha},\bm{s})$.
Intuitively, $\bm{s}$ is an NE if any player $p$ cannot improve the result (from losing to winning) 
by changing her strategy alone.
Because $p\in \Win(\bm{\alpha},\bm{s})$ is equivalent to $\out(\bm{s}) \in O_p$, 
a strategy profile $\bm{s} \in S$ is an NE for $\bm{\alpha}$ and $S$ if and only if 
$\forall p\in P.\ \forall s_p\in S_p.\ \out(\bm{s}[p\mapsto s_p])\in O_p
\implies \out(\bm{s})\in O_p$.
We write this condition as $\Nash(\bm{s},\bm{\alpha},S)$.
		\begin{example}\label{exm:graph_game_example}
    Figure~\ref{fig:graph_game} shows a $3$-player game arena
    $G=\Arena$ where
    $P = \{ 0, 1, 2\}$,
    $V = \{ v_0,v_1,v_2\}$,
    $V_p = \{ v_p \} \ (p\in P)$ and 
    $\Delta = \{ (v_i,v_j) \mid i,j \in P,\ i\neq j \}$.
    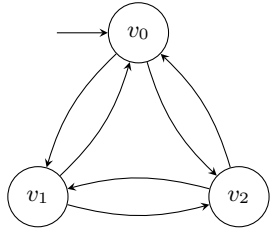
\begin{figure}[t]
		\centering
		\begin{tikzpicture}[everynode/.style={circle,draw,minimum size=0.8cm},>=stealth]
            \node[everynode] (v0) at ({90}: 1.5) {$v_0$};
            \node[everynode] (v1) at ({207}: 1.5) {$v_1$};
            \node[everynode] (v2) at ({333}: 1.5) {$v_2$};
            \node[left of=v0, node distance=1.2cm] (start) {};
            \draw[->] (start) to (v0);
            \draw[->] (v0) to [bend right=15] (v1);
            \draw[->] (v1) to [bend right=15] (v2);
            \draw[->] (v2) to [bend right=15] (v0);
            \draw[->] (v0) to [bend right=15] (v2);
            \draw[->] (v2) to [bend right=15] (v1);
            \draw[->] (v1) to [bend right=15] (v0);
		\end{tikzpicture}
			\caption{$3$-player game arena with B\"{u}chi objectives}
			\label{fig:graph_game}
	\end{figure}
    The objective of player $p$ is $O_p = \Buchi(\{v_{(p+1)\bmod 3}\})$,
    namely to visit the vertex $v_{(p+1)\bmod 3}$ infinitely often.
    The objective profile is $\bm{\alpha} = (O_p)_{p\in P}$.
    Let $\bm{s}=(s_p)_{p\in P} \in \Pos$
    be the strategy profile over positional strategies
    where $s_p(h) = v_{(p+1)\bmod 3}$ for all $h \in \Hist_p$.
    Let $\bm{s'}=(s'_p)_{p\in P} \in \Pos$
    be the strategy profile over positional strategies
    where $s'_0(h_0)=v_1$ and $s'_1(h_1) = s'_2(h_2) =v_0$
    for all $h_p \in \Hist_p \ (p\in P)$.
    It holds that $\out(\bm{s}) = (v_0 v_1 v_2)^\omega \in O_p$ for all $p$.
    Hence, $\Win(\bm{\alpha},\bm{s}) = \{ 0,1,2 \}$.
    On the other hand, it holds that $\out(\bm{s'}) = (v_0 v_1)^\omega \in O_0 \cap O_2$ and
    $\out(\bm{s'}) \notin O_1$.
    Hence, $\Win(\bm{\alpha},\bm{s'}) = \{ 0,2 \}$.
    The strategy profile $\bm{s}$ is an NE for $\bm{\alpha}$.
    The strategy profile $\bm{s'}$ is not an NE for $\bm{\alpha}$
    because there is a positional strategy $s_1 \in \Pos_1$ of player $1$ such that
    $1 \in \Win(\bm{\alpha},\bm{s'}[1\mapsto s_1])$.
\end{example}
	\paragraph{}
			~~~~~The following problem asks
if a given strategy profile $\bm{s}$ satisfies a specification $O$.
We use Lemma~\ref{lem:sver_pos} to prove the upper bounds of the complexities of verification problems
in Section~\ref{sec:VPCKR}.
\begin{problem}\label{prob:vps}
    Let $G$ be a game arena,
    $O \subseteq \Play$ be a Muller objective and
    $S\in \{ \Pos,\Strategies \}$ be a class of strategy profiles.
    We define the simple verification problem as follows.
    \[
        \SVER_S = \{ \langle G,\bm{s},O \rangle \mid \bm{s} \in S \wedge \out(\bm{s}) \in O \}.
    \]
\end{problem}

\begin{lemma}\label{lem:sver_pos}
    $\SVER_\Pos$ is in $\PTIME$.
\end{lemma}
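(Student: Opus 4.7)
The plan is to exploit the fact that when every player uses a positional strategy, the successor of any vertex $v$ on the outcome play is uniquely determined: if $v \in V_p$, then the next vertex is $s_p(v)$. Hence $\out(\bm{s})$ is obtained by iterating a single function $f:V\to V$ defined by $f(v)=s_p(v)$ for the unique $p$ with $v\in V_p$, starting from $v_0$.

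First, I would compute the outcome as a lasso. Starting from $v_0$, apply $f$ repeatedly, recording each visited vertex, until a previously seen vertex $v^*$ reappears. Since $|V|$ is finite, this terminates in at most $|V|+1$ steps, producing a decomposition $\out(\bm{s}) = v_0 v_1 \cdots v_k (v_{k+1}\cdots v_m)^\omega$ where $v_{k+1}=v^*$. The set $\Inf(\out(\bm{s}))$ is then exactly the set of vertices appearing on the cycle $\{v_{k+1},\ldots,v_m\}$, which can be read off directly. Both the outcome computation and the extraction of $\Inf(\out(\bm{s}))$ run in time polynomial (in fact linear) in $|V|$.

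Second, given the objective $O$, I would decide membership of $\out(\bm{s})$ in $O$ by a direct evaluation. If $O=\Muller(\varphi)$, construct the truth assignment $\theta_{\out(\bm{s})}$ defined by $\theta_{\out(\bm{s})}(v)=\True$ iff $v\in\Inf(\out(\bm{s}))$, and evaluate the Boolean formula $\varphi$ under $\theta_{\out(\bm{s})}$. Boolean formula evaluation is in $\PTIME$. The B\"uchi case $O=\Buchi(U)$ reduces to testing $\Inf(\out(\bm{s})) \cap U \neq \varnothing$, which is trivially polynomial; this is also subsumed by the Muller case via $\Buchi(U)=\Muller(\bigvee_{u\in U} u)$.

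There is no real obstacle here; the only subtlety worth spelling out is why a positional strategy profile yields an eventually periodic outcome, and why $\Inf(\out(\bm{s}))$ is exactly the set of vertices on the detected cycle. Both follow immediately from the determinism of $f$ and the pigeonhole principle. Hence the whole procedure runs in polynomial time, establishing $\SVER_\Pos \in \PTIME$.
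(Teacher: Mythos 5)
Your proposal is correct and follows essentially the same route as the paper's proof: compute the lasso decomposition $\out(\bm{s}) = u_0 u^\omega$ induced by the deterministic positional successor function, read off $\Inf(\out(\bm{s}))$ as the vertices on the cycle, build the truth assignment $\theta_{\out(\bm{s})}$, and evaluate the Muller formula $\varphi$ under it in polynomial time. Your version merely spells out the pigeonhole/determinism argument for eventual periodicity in slightly more detail than the paper does.
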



\begin{proof}
    Let $\varphi$ be a given Boolean formula
    representing a Muller objective~$O$
    (i.e.\ $O=\Muller(\varphi)$).
    Because $\bm{s} \in \Pos$ is a strategy profile over positional strategies,
    the play $\out(\bm{s})$ can be written as $\out(\bm{s}) = u_0 u^\omega$
    for some $u_0 \in V^*$ and $u \in V^+$
    such that $u_0 u$ does not contain any vertex twice.
    Vertices in $u$ are visited infinitely often
    and vertices not in $u$ are visited only finite times,
    and thus we can construct the truth assignment
    $\theta_{\out(\bm{s})}$ such that
    $\theta_{\out(\bm{s})}(v)=\True$ iff
    $v\in\Inf(\out(\bm{s}))$ in polynomial time.
    Simply evaluating $\varphi$ under $\theta_{\out(\bm{s})}$,
    we can check whether
    the play satisfies the Muller objective~$O$.
\qed
\end{proof}
\section{Epistemic Model}\label{sec:epistemic_model}
	In this section, we first review Kripke frame and epistemic model
together with the important notion: 
(epistemic) rationality and common knowledge of rationality
and give simple examples. 
We then propose a new characterization of the notion of common knowledge of 
rationality based on graph games. 
	\paragraph{KT5 Kripke frame}
		\begin{definition}
    A KT5 Kripke frame is a pair $(W,(R_p)_{p\in P})$, where
    \begin{itemize}
        \item $P$ is a finite set of players,
        \item $W$ is a finite set of (possible) worlds, and
        \item $R_p \subseteq W \times W$ is an equivalence relation on $W$, namely, $R_p$ satisfies
              \begin{description}
                \item[(reflexivity)] $\forall w \in W.\ (w,w) \in R_p$,
                \item[(symmetry)] $\forall w,w'\in W.\ \left((w,w') \in R_p \implies (w',w) \in R_p \right)$, and
                \item[(transitivity)] $\forall w_1,w_2,w_3 \in W.\ \left((w_1,w_2),(w_2,w_3) \in R_p \implies (w_1,w_3) \in R_p \right)$.
              \end{description}
    \end{itemize}
\end{definition}
A Kripke frame expresses the structure of knowledge of players.
In the world $w$, Player $p$ only knows that she is in one of the worlds of $R_p(w)$.
In other words,
in the world $w$, Player $p$ cannot distinguish the worlds of $R_p(w)$ with one another.
	\paragraph{Knowledge operator}
		For a given KT5 Kripke frame $(W,(R_p)_{p\in P})$,
we call any subset $E \subseteq W$ an \emph{event}.
\begin{definition}
    Let $(W,(R_p)_{p\in P})$ be a KT5 Kripke frame and
    $p \in P$ be a player.
    The \emph{knowledge operator} $\calK_p:\powerset{W}\to \powerset{W}$,
    the \emph{mutual knowledge operator} $\calMK:\powerset{W}\to \powerset{W}$ and 
    the \emph{common knowledge operator} $\calCK:\powerset{W}\to \powerset{W}$
    are defined as follows.
    \begin{align*}
        \calK_p(E) &= \{w \in W \mid R_p(w) \subseteq E \},\\
        \calMK(E) &= \bigcap_{p\in P} \calK_p(E), \ \text{and}\\
        \calCK(E) &= \bigcap_{1\leq i} \calMK^i(E),
    \end{align*}
    where $\calMK^i\ (0 \leq i)$ is defined as
    \begin{align*}
        \calMK^0(E) &= E, \ \text{and} \\
        \calMK^{i+1}E &= \calMK(\calMK^i(E)).
    \end{align*}
\end{definition}
Equivalently, we can define $\calCK$ as $\calCK(E) = \{ w \in W \mid R^+(w)\subseteq E\}$
where $R^+$ is the transitive closure of $\bigcup_{p\in P}R_p$.
Note that there is no constant upper bound on the depth of the recursive definition of $\calCK$.

Recall that in a world $w$, she knows only that she is in one of the worlds of $R_p(w)$.
If $w \in \calK_p(E)$, then $R_p(w) \subseteq E$ holds from the definition of $\calK_p$.
Hence, in a world $w \in \calK_p(E)$, player $p$ knows that she is in one of the worlds of $E$.
When $w\in \calK_p(E)$, we say that player $p$ knows the event $E$ occurs in $w$,
or simply $p$ knows $E$ in $w$.
The set $\calMK(E)$ is the event such that all players know the event $E$.
If $w \in \calMK(E)$, we say that the event $E$ is mutual knowledge in $w$.

If all players know an event $E$ and all players know that all players know the event $E$,
and all players know that all players know that all players know the event $E$ and so on,
we say that $E$ is common knowledge.
The set $\calCK(E)$ is the event where $E$ is common knowledge.
If $w \in \calCK(E)$, we say that the event $E$ is common knowledge in $w$.
	\paragraph{Epistemic model}
		\begin{definition}
    Let $G=\Arena$ be a game arena and $S\in \{ \Pos,\Strategies \}$ be a class of strategy profiles.
    An \emph{epistemic model} for $G$ and $S$ is a tuple $(W,(R_p)_{p\in P},(\sigma_p)_{p\in P})$
    where $(W,(R_p)_{p\in P})$ is a KT5 Kripke frame and
    $\sigma_p:W\to S$ is a function such that
    \begin{equation}\label{eqn:epistemic_model}
        \forall w,w'\in W.\ \left( (w,w')\in R_p \implies \sigma_p(w) = \sigma_p(w')\right).
    \end{equation}
\end{definition}
Condition~(\ref{eqn:epistemic_model}) guarantees that
each player takes the same strategy in the worlds she cannot distinguish.
For a game arena $G$ and a class of strategies $S\in \{ \Pos,\Strategies \}$,
let $M(G,S)$ be the set of all epistemic models for $G$ and $S$.
	\paragraph{}
		\begin{example}\label{exm:model}
    Let $G$ be a game arena defined in Example~\ref{exm:graph_game_example}.
    Let $s^R_p, s^L_p \in \Pos_p$ be the positional strategies for player $p$ defined as
    $s^R_p(h) = v_{(p-1)\bmod 3}$ and
    $s^L_p(h) = v_{(p+1)\bmod 3}$
    for all $h\in \Hist_p$.
    Let $M=(W,(R_p)_{p\in P},\bm{\sigma}) \in M(G,\Pos)$ be an epistemic model
    for $G$ and $\Pos$,
    where
    $W = \{ RRR, RRL, RLR$, $RLL, LRR$, $LRL, LLR, LLL \}$,
    $R_p = \{ (X_0 X_1 X_2, Y_0 Y_1 Y_2) \mid$
    $X_i, Y_i \in \{R, L\}$ for $0 \leq i \leq 2$ and $X_p = Y_p\}$,
    $\bm{\sigma}(XYZ) = (s^X_0,s^Y_1,s^Z_2)$ for all $X,Y,Z\in \{ R,L \}$.
      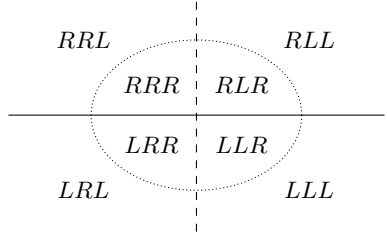
\begin{figure}[t]
        \centering
        \begin{tikzpicture}
          \foreach \i/\vi [evaluate=\i as \x using 0.6*\i] in {-1/R, 1/L}
          {
            \foreach \j/\vj [evaluate=\j as \y  using 0.4*\j,
                             evaluate=\x as \xx using 2.5*\x,
                             evaluate=\y as \yy using 2.5*\y] in {1/R, -1/L}
            {
              \node at (\x,\y)   {$\vj\vi{}R$};
              \node at (\xx,\yy) {$\vj\vi{}L$};
            }
          }
          \draw (-2.5,0) -- (2.5,0);
          \draw[dashed] (0,-1.55) -- (0,1.55);
          \draw[densely dotted] (0,0) ellipse (1.4 and 1);
        \end{tikzpicture}
        \caption{Equivalence classes in $W$}
            \label{fig:model_rat_example}
      \end{figure}
      Figure~\ref{fig:model_rat_example} shows the equivalence classes in $W$.
      The solid, dashed and densely dotted lines divide $W$ into
      the equivalence classes of $R_0$, $R_1$ and $R_2$ respectively.
      Let $E=\{ RRR, RRL, RLR, RLL \}$ be the event such that
      player $0$ takes the strategy $s^R_0$.
      Then, player $0$ knows $E$ in each world of $E$ because $\calK_0(E) = E$.
      On the other hand, players $1$ and $2$ never know $E$ in any world of $W$ 
      because $\calK_1(E) = \calK_2(E) = \varnothing$.
      The event $W$ is both mutual knowledge and common knowledge
      in all worlds because it holds that $\calMK(W)=\calCK(W) = W$.
\end{example}
	\paragraph{Rationality}
		\begin{definition}\label{def:rat}
    Let $G=\Arena$ be a game arena,
    $\bm{\alpha}$ be an objective profile,
    $S \in \{ \Pos, \Strategies \}$ be a class of strategy profiles and
    $M = (W,(R_p)_{p\in P}, \bm{\sigma})$ be an epistemic model for $G$ and $S$.
    For a world $w\in W$ and a player $p \in P$,
    if there is no $p$'s strategy $s_p \in S$ such that
    \begin{align}
        &\forall w'\in R_p(w).\
            \left(
                p \in \Win(\bm{\alpha},\bm{\sigma}(w'))
                \implies p \in \Win(\bm{\alpha},\bm{\sigma}(w')[p\mapsto s_p])
            \right), \text{and} \label{align:rat1}\\
        &\exists w'\in R_p(w).\
            \left(
                p \notin \Win(\bm{\alpha},\bm{\sigma}(w')) \wedge
                p \in \Win(\bm{\alpha},\bm{\sigma}(w')[p\mapsto s_p])
            \right),\label{align:rat2}
    \end{align}
    then $p$ is \emph{rational}\footnote{%
        The rationality in Definition~\ref{def:rat} is called the strong notion of rationality~\cite{Bo15}.
    } in $w$.
\end{definition}
We write the set of all worlds where $p$ is rational as $\RAT^p_{G,\bm{\alpha},M,S}\subseteq W$.
The set of all worlds where each player is rational is written as
$\RAT_{G,\bm{\alpha},M,S} = \bigcap_{p\in P}\RAT^p_{G,\bm{\alpha},M,S}$.
	\paragraph{Characterization of the notion of common knowledge of rationality}
		\begin{definition}
    Let $G=\Arena$ be a game arena,
    $\bm{\alpha}$ be an objective profile,
    $S \in \{ \Pos, \Strategies \}$ be a class of strategy profiles.
    We define a \emph{characterization $T_{G,\bm{\alpha},S} \subseteq S$
    of the notion of common knowledge of rationality for $G$, $\bm{\alpha}$ and $S$}
    as 
    \begin{align*}
        T_{G,\bm{\alpha},S} = \{ \bm{s}\in S \mid &\exists M=(W,(R_p)_{p\in P},\bm{\sigma})\in M(G,S). \\
        &\exists w \in W.\ (w\in \calCK \RAT_{G,\bm{\alpha},M,S} \wedge \bm{\sigma}(w) = \bm{s}) \}.
    \end{align*}
\end{definition}

\begin{lemma}\label{lem:winning}
    Let $T_{G,\bm{\alpha},S}$ be a characterization for
    a game arena $G=\Arena$, an objective profile $\bm{\alpha}$ and a class $S$ of strategy profiles.
    If $p\in P$ has a winning strategy,
    then $p$ is a winner, namely $p \in \Win(\bm{\alpha},\bm{t})$
    for all $\bm{t}\in T_{G,\bm{\alpha},S}$.
\end{lemma}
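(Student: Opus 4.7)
The plan is to chase definitions: unpack the membership $\bm{t}\in T_{G,\bm{\alpha},S}$ to get a witnessing world in which $p$ is rational, and then show that $p$'s assumed winning strategy would violate that rationality unless $p$ already wins with $\bm{t}$.

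First I would fix an arbitrary $\bm{t}\in T_{G,\bm{\alpha},S}$. By the definition of $T_{G,\bm{\alpha},S}$, there exist an epistemic model $M=(W,(R_p)_{p\in P},\bm{\sigma})\in M(G,S)$ and a world $w\in W$ such that $\bm{\sigma}(w)=\bm{t}$ and $w\in\calCK\RAT_{G,\bm{\alpha},M,S}$. The first small observation I need is that $\calCK(E)\subseteq E$ for every event $E$: because each $R_p$ is reflexive, $w\in\calK_p(E)$ forces $w\in R_p(w)\subseteq E$, hence $\calMK(E)\subseteq E$, and so $\calCK(E)=\bigcap_{i\geq 1}\calMK^i(E)\subseteq\calMK(E)\subseteq E$. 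Applying this to $E=\RAT_{G,\bm{\alpha},M,S}$ gives $w\in\RAT_{G,\bm{\alpha},M,S}\subseteq\RAT^p_{G,\bm{\alpha},M,S}$, so $p$ is rational in $w$.

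Next I would argue by contradiction. Suppose $p\notin\Win(\bm{\alpha},\bm{t})=\Win(\bm{\alpha},\bm{\sigma}(w))$, and let $s_p^\ast\in S_p$ be a winning strategy of $p$. I will show that $s_p^\ast$ witnesses the failure of rationality at $w$, i.e.\ that it satisfies both (\ref{align:rat1}) and (\ref{align:rat2}) from Definition~\ref{def:rat}. For (\ref{align:rat1}): by definition of a winning strategy, $\Out(\bm{\sigma}(w')[p\mapsto s_p^\ast])\in O_p$ for every $w'\in W$, so in particular the implication in (\ref{align:rat1}) holds trivially (its conclusion is true for every $w'\in R_p(w)$). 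For (\ref{align:rat2}): take $w'=w$, which lies in $R_p(w)$ by reflexivity; then $p\notin\Win(\bm{\alpha},\bm{\sigma}(w))$ by the contradiction hypothesis, and $p\in\Win(\bm{\alpha},\bm{\sigma}(w)[p\mapsto s_p^\ast])$ because $s_p^\ast$ is winning. Both clauses of the definition of rationality are therefore violated by $s_p^\ast$, contradicting $w\in\RAT^p_{G,\bm{\alpha},M,S}$.

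There is no serious obstacle here — the argument is a direct unfolding of the definitions, with the only subtle point being the use of reflexivity twice: once to derive $\calCK(E)\subseteq E$ and once to use $w$ itself as the witness for (\ref{align:rat2}). The one caveat worth a remark in the write-up is that, when $S=\Pos$, the lemma should be read as asserting that $p$ has a winning strategy \emph{in $S$}, so that $s_p^\ast\in S_p$ is a legitimate deviation in Definition~\ref{def:rat}; otherwise the quantification in (\ref{align:rat1}) and (\ref{align:rat2}) would not directly reach $s_p^\ast$.
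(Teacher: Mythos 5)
Your proof is correct and follows essentially the same route as the paper's: extract a witnessing world $w$ with $w\in\calCK\RAT_{G,\bm{\alpha},M,S}$, use reflexivity to conclude $p$ is rational in $w$, and then show the winning strategy would witness irrationality (via $w'=w$ in condition~(\ref{align:rat2})) if $p$ lost under $\bm{t}$. Your closing caveat is already handled by the paper, since a winning strategy is defined there relative to the class $S$.
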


\begin{proof}
    Assume $p \in P$ has a winning strategy.
    Let $\bm{t}$ be an arbitrary strategy profile
    in $T_{G,\balpha,S}$.
    By the definition of $T_{G,\balpha,S}$,
    there exist
    $M=(W,(R_p)_{p\in P},\bsigma)\in M(G,S)$ and
    $w\in W$ such that
    $w\in \calCK \RAT_{G,\balpha,M,S}$ and
    $\bm{t} = \bsigma(w)$.
    Since $w\in R^+(w)$,
    $w\in \calCK \RAT_{G,\balpha,M,S}$ implies
    $w\in \RAT_{G,\balpha,M,S}$.
    If $p$ is a loser under~$\bm{t}$, namely $p \notin \Win(\balpha,\bsigma(w))$,
    then $p$ is not rational in~$w$
    because her winning strategy satisfies
    both conditions (\ref{align:rat1}) and (\ref{align:rat2}) in Definition~\ref{def:rat}
    by letting $w'=w$ in condition~(\ref{align:rat2}).
    This contradicts
    $w\in \RAT_{G,\balpha,M,S}$,
    and thus, $p$ should win under~$\bm{t}$.
\qed
\end{proof}
	\paragraph{}
		\begin{example}[continued]\label{exm:cr}
    Let $G$ and $M$ be the game arena and the epistemic model respectively, defined in Example~\ref{exm:model}.
    Let $\bm{\alpha'} = (O_p)_{p\in P}$ be an objective profile where
    $O_p = \Buchi(\{ v_p \})$.
    Then, it holds that $\RAT_{G,\bm{\alpha'},M,\Pos}=W$
    because for any world $w\in W$, player $p$ and $p$'s positional strategy $s_p\in \Pos_p$,
    the condition~(\ref{align:rat1}) in Definition~\ref{def:rat} holds but 
    (\ref{align:rat2}) does not hold.
    For example,
    let $w=RRR$, $p=0$ and $s_p = s^L_0$.
    Note that by the structure of $G$ and $\bm{\alpha'}$, player $0$ loses
    if and only if players $1$ and $2$ takes $s_1^L$ and $s_2^R$
    (regardless of the strategy of player $0$).
    Hence, $0\in\Win(\bm{\alpha'},\bm{\sigma}(w')) \iff 0\in\Win(\bm{\alpha'},\bm{\sigma}(w')[0\mapsto s^L_0])$
    for all $w' \in R_0(RRR)$.
    By $\RAT_{G,\bm{\alpha'},M,\Pos}=W$,
    it is easy to see that $\calCK\RAT_{G,\bm{\alpha'},M,\Pos}=W$ from the structure of $M$,
    and hence $T_{G,\bm{\alpha'},\Pos}=\{ \bm{\sigma}(w) \mid w \in W\} = \Pos$.
\end{example}
	\paragraph{Restriction of epistemic models}
		So far, we have made no assumptions about the size of an epistemic model and 
hence there could be an epistemic model whose size is extremely large. 
An epistemic model represents a structure of information that players have.
It is unnatural to assume that players can use extremely large information within a limited time or a limited computation power.
Therefore, we assume that there is a polynomial $p(n)$ such that
the size of a given epistemic model is not greater than $p(n)$ where $n$ is the size of a given game arena.

Let $G$ be a game arena and $S \in \{ \Pos,\Strategies \}$ be a class of strategy profiles.
We write the set of all epistemic models for $G$ and $S$
whose size is not greater than $p(n)$ for some polynomial $p$
as $M_{\classP}(G,S)$ where $n$ is the size of $G$.
Let $\bm{\alpha}$ be an objective profile.
We also define a characterization $T_{\classP,G,\bm{\alpha},S} \subseteq S$
as
\begin{align*}
    T_{\classP,G,\bm{\alpha},S} = \{ \bm{s}\in S \mid \exists &M=(W,(R_p)_{p\in P},\bm{\sigma})\in M_{\classP}(G,S). \\
    &\exists w \in W.\ (w\in \calCK \RAT_{G,\bm{\alpha},M,S} \wedge \bm{\sigma}(w) = \bm{s}) \}.
\end{align*}
\section{Verification Problems with Common Knowledge of Rationality}\label{sec:VPCKR}
	We define three types of rational verification problems. 
The first two of them are defined based on epistemic rationality while 
the last one is defined based on Nash equilibrium (NE). 
The second problem is a variant of the first problem,
where the size of an epistemic model is not greater than a polynomial size of a game arena.
We start with the analysis of the last problem
because NE is easier to analyze than epistemic rationality. 
	\begin{problem}\label{prob:VPCKR}
    We define \emph{verification problems with common knowledge of rationality (VPCKR)} as
    \begin{align*}
        \VPCKR_S &= \{ \langle G,\bm{\alpha},O \rangle \mid \forall \bm{t} \in T_{G,\bm{\alpha},S}.\ \out(\bm{t}) \in O \},\\
        \VPCKR_{\classP,S} &= \{ \langle G,\bm{\alpha},O \rangle \mid \forall \bm{t} \in T_{\classP,G,\bm{\alpha},S}.\ \out(\bm{t}) \in O \},\text{ and} \\
        \VPNash_S &= \{ \Tpl{G,\balpha,O} \mid
            \forall \bolds\in S.\, \Nash(\bolds,\balpha,S)
            \implies \Out(\bolds)\in O \}
    \end{align*}
    where $G=\Arena$ is a game arena,
    $\bm{\alpha}$ is an objective profile over Muller objectives,
    $O\subseteq \Play$ is a specification given by a Muller objective and
    $S \in \{ \Pos,\Strategies \}$ is a class of strategy profiles.
\end{problem}

\begin{example}[continued]
    Let $G$ and $\bm{\alpha'}=(O_p)_{p\in P}$ be the game arena and the objective profile, respectively
    defined in Example~\ref{exm:cr}.
    Let $O$ be the specification defined as $O = \bigcap_{p\in P}O_p$.
    Recall that $T_{G,\bm{\alpha'},\Pos}=\Pos$.
    Then, $\langle G,\bm{\alpha'},O \rangle \notin \VPCKR_\Pos$
    because there is the strategy profile $\bm{s_3}=(s^R_0,s^R_1,s^L_2)\in T_{G,\bm{\alpha'},\Pos}$
    such that $\out(\bm{s_3}) \notin O$.
\end{example}

Note that
$\VPCKR_{\classP,S}$ is \emph{not} a restricted problem of
$\VPCKR_{S}$, because
the fact that $\Tpl{G,\balpha,O} \notin \VPCKR_{S}$
(i.e.\ there is some $\bm{t}\in T_{G,\balpha,S}$
that satisfies $\Out(\bm{t})\notin O$)
gives no information on
whether every $\bm{t}\in T_{\classP,G,\balpha,S}$
satisfies $\Out(\bm{t})\in O$ or not.
For the same reason,
$\VPNash_S$ is not a restricted problem of
$\VPCKR_{S}$ or
$\VPCKR_{\classP,S}$.
On the other hand, we can say that
$\VPCKR_{S} \subseteq \VPCKR_{\classP,S} \subseteq \VPNash_S$ holds,
because every NE belongs to $T_{\classP,G,\bm{\alpha},S}$%
\footnote{
  For every NE $\bolds\in S$, the epistemic model with a single world $w$
  where $\bsigma(w)=\bolds$ satisfies
  $w \in \calCK \RAT_{G,\balpha,M,S}$.
}
and $T_{\classP,G,\balpha,S} \subseteq T_{G,\balpha,S}$.
Also note that
$\VPCKR_{\Pos}$, $\VPCKR_{\classP,\Pos}$, and
$\VPNash_{\Pos}$ are not restricted problems of
$\VPCKR_{\Strategies}$,
$\VPCKR_{\classP,\Strategies}$, and
$\VPNash_{\Strategies}$, respectively.
Moreover, because
$\bm{t}\in T_{G,\balpha,\Pos}$
does not imply
$\bm{t}\in T_{G,\balpha,\Strategies}$,
$\VPCKR_{\Strategies} \not\subseteq \VPCKR_{\Pos}$
in general.
(For the same reason,
$\VPCKR_{\classP,\Strategies} \not\subseteq \VPCKR_{\classP,\Pos}$
and $\VPNash_{\Strategies} \not\subseteq \VPNash_{\Pos}$
in general.)


Before investigating the complexity of
$\VPCKR_{\classP,S}$ and $\VPCKR_{S}$,
we mention the complexity of $\VPNash_S$.
As described in Introduction,
$\VPNash_S$ is closely related to rational synthesis (RS),
and with the class of Muller objectives
(which is closed under negation and the negation does not cause
exponential blow-up),
we can easily show that
the complexity of RS with the class $S$ of strategy profiles
is the same as that of $\OL{\VPNash_S}$.
By the results of~\cite{CFGR16},
we have the following proposition
for~$\VPNash_{\Strategies}$.
\begin{proposition}
$\VPNash_{\Strategies}$ is PSPACE-complete.
\end{proposition}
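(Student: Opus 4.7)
The plan is to reduce $\VPNash_\Strategies$ to the rational synthesis problem (RS) with Muller objectives studied in \cite{CFGR16}, exploiting the fact that Muller objectives are closed under negation without blow-up. Given a Muller objective $O=\Muller(\varphi)$, its complement $\Play\setminus O$ is exactly $\Muller(\neg\varphi)$, and $\neg\varphi$ has size linear in $\varphi$. This observation is what allows the reduction to be polynomial in both directions.

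First, I would formalize the equivalence: $\Tpl{G,\balpha,O}\in\OL{\VPNash_\Strategies}$ iff there exists $\bm{s}\in\Strategies$ with $\Nash(\bm{s},\balpha,\Strategies)$ and $\Out(\bm{s})\in\Muller(\neg\varphi)$. The right-hand side is precisely an instance of rational synthesis with Muller objectives for the arena $G$, objective profile $\balpha$, and specification $\Muller(\neg\varphi)$. Conversely, any instance of RS can be transformed into an instance of $\OL{\VPNash_\Strategies}$ by negating the Boolean formula of the specification. Both transformations are computable in logarithmic space and preserve the NE-based rationality assumption used in this paper.

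For the upper bound, $\VPNash_\Strategies$ lies in PSPACE: by the above reduction its complement lies in PSPACE (since RS with Muller objectives is in PSPACE by \cite{CFGR16}), and PSPACE is closed under complement. For the lower bound, the PSPACE-hardness of RS with Muller objectives, also established in \cite{CFGR16}, transfers to $\OL{\VPNash_\Strategies}$ via the same reduction, and hence to $\VPNash_\Strategies$ itself by closure of PSPACE under complement.

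The only real obstacle beyond invoking \cite{CFGR16} is to make sure the formulation of RS for which PSPACE-completeness is proved there matches the NE-based rationality used in $\VPNash_\Strategies$: specifically, that the decision question is the existence of an NE whose outcome satisfies a given specification (cooperative RS), rather than the non-cooperative variant (NCRS) of \cite{KPV16}. Once the problem formulations are aligned, the argument reduces to a single complementation step.
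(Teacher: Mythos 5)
Your proposal is correct and follows essentially the same route as the paper: the paper likewise observes that Muller objectives are closed under negation without blow-up, identifies $\OL{\VPNash_{\Strategies}}$ with (cooperative) rational synthesis for the negated specification, and invokes the PSPACE-completeness result of \cite{CFGR16} together with closure of PSPACE under complement. In fact you spell out the reduction and the cooperative-versus-non-cooperative alignment in more detail than the paper, which only states that the equivalence is ``easily shown.''
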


Although the complexity of the same problem
with positional strategies
was not studied in~\cite{CFGR16},
we can show that $\VPNash_{\Pos}$ is
$\classPi_2$-complete as follows.
	  \begin{lemma}\label{lem:VPNash_Pos_hard}
    $\VPNash_{\Pos}$ is $\classPi_2$-hard.
  \end{lemma}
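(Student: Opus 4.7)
The plan is to reduce the $\classPi_2$-complete problem $\aeSAT$ to $\VPNash_{\Pos}$. Given an instance $\forall x_1 \cdots x_n\,\exists y_1 \cdots y_m\, \psi(\vec{x},\vec{y})$ with $\psi$ a Boolean formula, I build in polynomial time an arena $G$, an objective profile $\balpha$, and a specification $O$ such that the quantified formula is true iff every positional NE of $G$ for $\balpha$ produces an outcome in $O$.

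The arena $G$ will have $n+1$ players. Player $i\in\{1,\dots,n\}$, the ``universal'' player for $x_i$, owns a choice vertex $u_i$ with two successors $u_i^{\True}$ and $u_i^{\False}$; player~$0$, the ``existential'' player, owns choice vertices $v_1,\dots,v_m$, each with successors $v_j^{\True}$ and $v_j^{\False}$. The other intermediate vertices have a unique outgoing edge, so their ownership is irrelevant. All these edges form a single cycle
\[
  u_1 \to u_1^{X_1} \to u_2 \to \cdots \to u_n^{X_n} \to v_1 \to v_1^{Y_1} \to \cdots \to v_m^{Y_m} \to u_1,
\]
where each $X_i, Y_j\in\{\True,\False\}$ is picked by the owner's positional strategy; the initial vertex is $u_1$. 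Every positional profile $\bolds$ thus yields an ultimately periodic play whose $\Inf$-set records a complete assignment $(\vec x,\vec y)$. I give each universal player $i$ a tautological Muller objective, e.g.\ $\Muller(u_i \vee \neg u_i)$, making her a winner under every profile. Player $0$ receives $O_0 = \Muller(\varphi)$, where $\varphi$ is obtained from $\psi$ by substituting the vertex atom $u_i^{\True}$ for $x_i$ and $v_j^{\True}$ for $y_j$; then $\Out(\bolds)\in O_0$ exactly captures satisfaction of $\psi$ by the encoded assignment. I set $O := O_0$.

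Correctness is then easy to read off. Because the universal players win under every profile, the implication in $\Nash(\bolds,\balpha,\Pos)$ is vacuous for them and their joint choice $\vec x$ is completely free at any NE. For player~$0$, the NE condition reduces to: either $\Out(\bolds)\in O_0$ (equivalently $\psi(\vec x,\vec y)$ holds), or no positional deviation of player~$0$ is winning, i.e.\ $\forall \vec y.\,\neg\psi(\vec x,\vec y)$. Consequently an NE $\bolds$ with $\Out(\bolds)\notin O$ exists iff some $\vec x$ admits no satisfying $\vec y$, which is exactly the negation of the $\aeSAT$ instance. Hence $\langle G,\balpha,O\rangle\in\VPNash_{\Pos}$ iff $\forall\vec x\,\exists\vec y\,\psi(\vec x,\vec y)$, proving $\classPi_2$-hardness.

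The one point to be verified carefully is that positional deviations $s_0'\in\Pos_0$ of player~$0$ correspond bijectively to assignments $\vec y'$: thanks to positionality and the linear cycle structure, a positional strategy of player~$0$ is just a choice at every $v_j$ between $v_j^{\True}$ and $v_j^{\False}$, so the set of her deviations ranges over all of $\{\True,\False\}^m$ as desired. All remaining details (polynomial size of the construction, expressibility of the tautological objectives as Muller formulas, and well-definedness of the arena) are routine.
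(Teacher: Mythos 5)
Your proposal is correct and takes essentially the same approach as the paper: reduce $\aeSAT$ via a cyclic choice-gadget arena in which positional profiles encode truth assignments, give the universal side a tautological objective so the NE condition constrains only the existential player, and set the specification equal to the existential player's objective $\psi$. The only cosmetic difference is that you split the universal side into $n$ players rather than the paper's single player $A$; since their objectives are tautologies, this changes nothing in the argument.
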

  \begin{proof}
    We reduce $\aeSAT$ to $\VPNash_{\Pos}$.
    Let $\varphi =
      \forall x_1\ldots x_n \exists y_1\ldots y_m\,\psi$
    be an instance of $\aeSAT$,
    where $x_1,\ldots,x_n$, $y_1,\ldots,y_m$ are variables
    and $\psi$ is a Boolean formula over them.
    From $\varphi$, we construct an instance
    $\Tpl{G,\balpha,O}$ of $\VPNash_{\Pos}$
    where $G=(\{A,E\}, V_A\cup V_E, (V_A,V_E), a_1, \Delta)$
    as follows.
    \begin{align*}
      V_A &= \{a_1,\ldots,a_n,\, x_1,\ldots,x_n,\,
               \OL{x_1},\ldots,\OL{x_n}\}, \\
      V_E &= \{e_1,\ldots,e_m,\, y_1,\ldots,x_m,\,
               \OL{y_1},\ldots,\OL{y_m}\}, \\
      \Delta &=
        \{(a_i,u) \mid 1\le i\le n,\ u\in\{x_i,\OL{x_i}\}\} \\
        &\:{}\cup
        \{(u,a_{i+1}) \mid 1\le i < n,\ u\in\{x_i,\OL{x_i}\}\}
        \cup \{(u,e_1) \mid u\in\{x_n,\OL{x_n}\}\} \\
        &\:{}\cup
        \{(e_i,u) \mid 1\le i \le n,\ u\in\{y_i,\OL{y_i}\}\} \\
        &\:{}\cup
        \{(u,e_{i+1}) \mid 1\le i < n,\ u\in\{y_i,\OL{y_i}\}\}
        \cup \{(u,a_1) \mid u\in\{y_m,\OL{y_m}\}\}, \\
      \balpha &= (O_{\!A},O_E)
        \text{ where }O_{\!A}=\True\text{ and }O_E=\psi\text{, and} \\
      O &= \psi.
    \end{align*}
    Figure~\ref{fig:pos_hard} shows the game arena obtained from a formula over
    $x_1$, $x_2$ and $y_1$.
    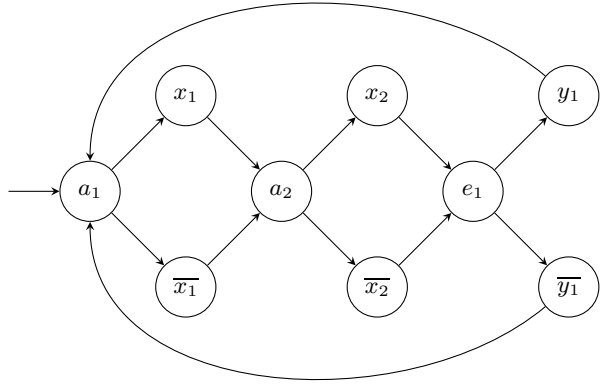
\begin{figure}[t]
		\centering
		\begin{tikzpicture}[everynode/.style={circle,draw,minimum size=0.8cm},>=stealth,node distance=1.8cm]
            \node[everynode] (v1) {$a_1$};
			\node[left of=v1,node distance=1.2cm] (start) {};
            \node[everynode, above right of=v1] (x1) {$x_1$};
            \node[everynode, below right of=v1] (nx1) {$\overline{x_1}$};
            \node[everynode, below right of=x1] (v2) {$a_2$};
            \node[everynode, above right of=v2] (x2) {$x_2$};
            \node[everynode, below right of=v2] (nx2) {$\overline{x_2}$};
            \node[everynode, below right of=x2] (v3) {$e_1$};
            \node[everynode, above right of=v3] (x3) {$y_1$};
            \node[everynode, below right of=v3] (nx3) {$\overline{y_1}$};
			\draw[->] (start) to (v1);
			\draw[->] (v1) to (x1);
			\draw[->] (v1) to (nx1);
			\draw[->] (x1) to (v2);
			\draw[->] (nx1) to (v2);
			\draw[->] (v2) to (x2);
			\draw[->] (v2) to (nx2);
			\draw[->] (x2) to (v3);
			\draw[->] (nx2) to (v3);
			\draw[->] (v3) to (x3);
			\draw[->] (v3) to (nx3);
			\draw[->] (x3) to [out=140, in=90] (v1);
			\draw[->] (nx3) to [out=220, in=270] (v1);
		\end{tikzpicture}
			\caption{The game arena constructed from a formula having $x_1$, $x_2$ and $y_1$}
			\label{fig:pos_hard}
	\end{figure}
    Note that we regard a Boolean formula $\psi$
    as a Muller objective.
    For example, a formula $\psi = x_1\lor\OL{x_2}$ is considered as
    the Muller objective such that
    a player whose objective is $\psi$ wins if
    the play visits vertex $x_1$ infinitely often or
    the play visits $x_2$ only finite times.
  
    By the structure of $G$,
    we can consider
    every strategy profile $\bolds\in\Pos$ over positional strategies
    as a truth assignment to the variables in $\varphi$\,;
    choosing $x_i$ (resp.\ $\OL{x_i}$) as the next vertex at vertex $a_i$
    corresponds to letting $x_i = \True$ (resp.\ $x_i = \False$).
    The same for $y_i$ and $\OL{y_i}$.
  %
  %
    The play $\Out(\bolds)$ contains
    the chosen vertices infinitely many times while
    it does not contain any unchosen vertex.
    Therefore, $\Out(\bolds)$ satisfies $\psi$ as a Muller objective
    if and only if $\psi$ is true under the truth assignment
    represented by~$\bolds$.
  
    We show that $\varphi \in \aeSAT \iff
    \Tpl{G,\balpha,O} \in \VPNash_{\Pos}$.
  
    ($\Longrightarrow$) \
    Assume that $\varphi\in\aeSAT$.
    We have to show
    $\forall \bolds \in\Pos.\,\Nash(\bolds,\balpha,\Pos)$ $\implies
    \Out(\bolds)\in O$.
    Assume that $\bolds = (s_{\!A},s_E) \in \Pos$ and
    $\Nash(\bolds,\balpha,\Pos)$.
    Since $\varphi\in\aeSAT$,
    for the truth assignment represented by $s_{\!A}$,
    there must exist a truth assignment to
    $y_1,\ldots,y_m$ that makes $\psi$ true.
    Let $s'_E$ denote
    the positional strategy of $E$ corresponding to
    this assignment;
    hence, $\Out(\Deriv{\bolds}{E}{s'_E})\in O_E$.
    On the other hand,
    by the definition of NE, either
    $\Out(\bolds)\in O_E$ or
    $\Out(\Deriv{\bolds}{E}{s'_E})\notin O_E$ should hold
    for any positional strategy $s'_E$ of~$E$.
    As shown above, the latter does not hold.
    Therefore, the former holds and thus
    $\Out(\bolds)\in O$ since $O = O_E$.
  
    ($\Longleftarrow$) \
    Assume that
    $\forall \bolds \in\Pos.\,\Nash(\bolds,\balpha,\Pos) \implies
    \Out(\bolds)\in O$.
    Let $s_{\!A}$ be an arbitrary positional strategy of~$A$.
    We have to show that
    for $s_{\!A}$, there exists a positional strategy $s_E$ of
    $E$ such that
    the truth assignment corresponding to $(s_{\!A},s_E)$
    makes $\psi$ true.
    Let $s_E$ be an arbitrary chosen positional strategy of~$E$
    and let $\bolds = (s_{\!A},s_E)$.
    By the assumption,
    $\Out(\bolds)\in O$ holds or $\Nash(\bolds,\balpha,\Pos)$ does not hold.
    If $\Out(\bolds)\in O$ holds,
    then $s_E$ is just the desired strategy that
    makes $\psi$ true.
    If $\Nash(\bolds,\balpha,\Pos)$ does not hold,
    then there exist a player $p$ and
    a positional strategy $s'$ of $p$ that satisfy
    $\Out(\bolds)\notin O_p$ and $\Out(\Deriv{\bolds}{p}{s'})\in O_p$.
    We have $p=E$ because $\Out(\bolds)\notin O_{\!A}$ never holds.
    Since $\Out(\Deriv{\bolds}{E}{s'})\in O_E$,
    $s'$ is the desired strategy that makes $\psi$ true.
  \qed
  \end{proof}
	\begin{lemma}\label{lem:VPNash_Pos}
    $\VPNash_{\Pos}$ is in $\classPi_2$.
  \end{lemma}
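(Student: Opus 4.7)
The plan is to put $\VPNash_{\Pos}$ into a $\forall\exists$ normal form whose inner predicate is polynomial-time checkable via Lemma~\ref{lem:sver_pos}.

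First I would note that a positional strategy profile $\bolds\in\Pos$ can be encoded by a table of size $O(|V|)$, and similarly a player $p\in P$ and a single-player positional strategy $s_p\in\Pos_p$ have size polynomial in $|G|$. So a $\forall$-quantifier over $\Pos$ and an $\exists$-quantifier over $P\times\Pos_p$ fit as the outer and inner quantifiers of a $\classPi_2$ description.

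Next, I rewrite the condition defining $\VPNash_{\Pos}$. By definition, $\Tpl{G,\balpha,O}\in\VPNash_{\Pos}$ iff for every $\bolds\in\Pos$, $\Nash(\bolds,\balpha,\Pos)\implies\Out(\bolds)\in O$. Using implication-as-disjunction and expanding the negation of $\Nash$, this is equivalent to
\begin{align*}
\forall \bolds\in\Pos.\ \bigl(&\Out(\bolds)\in O\ \vee\\
&\ \exists p\in P.\ \exists s_p\in\Pos_p.\ (\Out(\bolds)\notin O_p \wedge \Out(\bolds[p\mapsto s_p])\in O_p)\bigr).
\end{align*}
Bundling the two disjuncts into a single existential witness $z$ (either a flag declaring $\Out(\bolds)\in O$, or a pair $(p,s_p)$ witnessing a profitable deviation from $\bolds$), we obtain a genuine $\forall\exists$ form over objects of polynomial size.

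Finally, the inner predicate reduces to at most three tests of the form $\Out(\cdot)\in O'$ with $O'\in\{O,O_p\}$ and each argument a positional strategy profile, namely $\bolds$ and $\bolds[p\mapsto s_p]$; each such test runs in polynomial time by Lemma~\ref{lem:sver_pos}, so their conjunction is polynomial-time checkable. This places $\VPNash_{\Pos}$ in $\classPi_2$. I do not foresee any real obstacle: once the Nash condition is unfolded into the disjunctive form above, the quantifier structure and the polynomial-time inner check follow immediately from Lemma~\ref{lem:sver_pos}, and combined with Lemma~\ref{lem:VPNash_Pos_hard} this yields the asserted $\classPi_2$-completeness of $\VPNash_{\Pos}$.
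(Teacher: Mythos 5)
Your proposal is correct and is essentially the paper's argument: both quantify universally over a (polynomial-size) positional profile $\bolds$, existentially over a player and a deviating positional strategy witnessing $\neg\Nash(\bolds,\balpha,\Pos)$, and discharge the inner membership tests $\Out(\cdot)\in O'$ in polynomial time via Lemma~\ref{lem:sver_pos}. The only difference is presentational — you use the $\forall\exists$ quantifier characterization of $\classPi_2$ directly, while the paper phrases the same structure as a $\coNP$ machine with an $\NP$ oracle deciding the complement — and these are standard equivalent formulations.
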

  \begin{proof}
    By Lemma~\ref{lem:sver_pos}, deciding whether $\Out(\bolds)\in O$
    is in~$\classP$
    for a given game, a Muller objective $O$,
    and a strategy profile $\bolds$ over
    positional strategies.
    Deciding whether $\Nash(\bolds,\balpha,\Pos)$ for a given game
    and $\bolds$ is in $\coNP$\@:
    Guess a player $p$ and a positional strategy $s_p$ of $p$,
    and check whether $\Out(\bolds)\notin O_p$ and
    $\Out(\Deriv{\bolds}{p}{s_p})\in O_p$.
    Let $A$ be an oracle of this problem.
    Using $A$, we can construct a non-deterministic
    polynomial-time oracle Turing machine for deciding
    whether $\Tpl{G,\balpha,O}\notin\VPNash_{\Pos}$:
    Guess $\bolds$ and check whether
    $\Nash(\bolds,\balpha,\Pos)$ and
    $\Out(\bolds)\notin O$.
    Therefore, $\VPNash_{\Pos}$ is in
    $\coNP^{\NP} = \classPi_2$.
  \qed
  \end{proof}

  \begin{theorem}
  $\VPNash_{\Pos}$ is $\classPi_2$-complete.
\end{theorem}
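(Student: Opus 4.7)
The plan is straightforward: the theorem is a direct consequence of the two preceding lemmas. Specifically, Lemma~\ref{lem:VPNash_Pos} establishes the upper bound $\VPNash_{\Pos}\in\classPi_2$, and Lemma~\ref{lem:VPNash_Pos_hard} establishes $\classPi_2$-hardness via a reduction from $\aeSAT$. Combining these two results immediately yields $\classPi_2$-completeness, so the proof is essentially a one-line combination.

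If I were to re-derive the two halves, I would first handle membership. The key observation is that $\Out(\bolds)\in O$ is decidable in $\classP$ by Lemma~\ref{lem:sver_pos}, since positional strategy profiles produce ultimately periodic outcomes whose $\Inf$ set is easy to compute. Checking the negation of $\Nash(\bolds,\balpha,\Pos)$ is then in $\NP$: guess a player $p$ and a positional deviation $s_p$, and verify $\Out(\bolds)\notin O_p$ and $\Out(\Deriv{\bolds}{p}{s_p})\in O_p$ using the $\classP$ check. The complement of $\VPNash_{\Pos}$ can then be decided in $\NP^{\NP}$ by guessing $\bolds$, checking $\Out(\bolds)\notin O$ directly, and invoking the NE oracle; hence $\VPNash_{\Pos}\in\coNP^{\NP}=\classPi_2$.

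For hardness, I would recall the reduction from $\aeSAT$. Given $\varphi=\forall \vec{x}\,\exists \vec{y}\,\psi$, the arena in Lemma~\ref{lem:VPNash_Pos_hard} has two players $A$ (choosing truth values for $x_i$) and $E$ (choosing truth values for $y_j$), with $O_{\!A}=\True$, $O_E=\psi$, and specification $O=\psi$. Positional strategy profiles biject with truth assignments, and $\Out(\bolds)$ satisfies $\psi$ as a Muller objective iff $\psi$ is true under that assignment. Because $A$ always wins, NE reduces to the condition that $E$ has no beneficial deviation, which corresponds exactly to the $\forall\vec{x}\,\exists \vec{y}$ structure of $\aeSAT$.

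The only potentially subtle point, which is really the main conceptual content and is already handled inside Lemma~\ref{lem:VPNash_Pos_hard}, is the correspondence between NE failures and the existence of a witness for the inner $\exists$ quantifier; once this correspondence is spelled out, both directions of the reduction follow. Since nothing more needs to be done at the level of the theorem itself, the proof in the paper should simply cite Lemmas~\ref{lem:VPNash_Pos_hard} and~\ref{lem:VPNash_Pos}.
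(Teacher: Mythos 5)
Your proposal is correct and matches the paper exactly: the theorem is proved by simply citing Lemma~\ref{lem:VPNash_Pos_hard} for $\classPi_2$-hardness and Lemma~\ref{lem:VPNash_Pos} for membership in $\classPi_2$. Your recap of the two lemmas' arguments is also faithful to the paper's reasoning.
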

\begin{proof}
  By Lemmas~\ref{lem:VPNash_Pos_hard} and~\ref{lem:VPNash_Pos}.
\qed
\end{proof}

Next, let us consider the complexity of
$\VPCKR_{S}$ and $\VPCKR_{\classP,S}$.
In this paper, we concentrate on the complexity of
$\VPCKR_{\Pos}$ and $\VPCKR_{\classP,\Pos}$.
For $\VPCKR_{\classP,\Pos}$,
we can show that it is coNP-hard and in $\classPi_2$
as follows.

\begin{theorem}\label{th:VPCKR_P_hard}
    $\VPCKR_{\classP,\Pos}$ is coNP-hard.
\end{theorem}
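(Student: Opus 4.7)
The plan is to reduce the coNP-complete problem $\mathsf{UNSAT}$ to $\VPCKR_{\classP,\Pos}$ using a stripped-down variant of the gadget in the proof of Lemma~\ref{lem:VPNash_Pos_hard}. Given a Boolean formula $\psi(y_1,\ldots,y_m)$, I would build a single-player game arena $G$ with player $E$, vertices $e_1,\ldots,e_m,y_1,\OL{y_1},\ldots,y_m,\OL{y_m}$ all controlled by $E$, and a cyclic edge structure forcing $E$ at each $e_i$ to choose one of $\{y_i,\OL{y_i}\}$ and then proceed to $e_{(i\bmod m)+1}$; the initial vertex is $e_1$. Setting $\bm{\alpha}=(O_E)$ with $O_E=\psi$ interpreted as a Muller objective exactly as in Lemma~\ref{lem:VPNash_Pos_hard}, and the specification $O=\neg\psi$, each positional strategy of $E$ encodes a truth assignment to $y_1,\ldots,y_m$, and its unique outcome satisfies the Muller objective $\psi$ iff this assignment satisfies $\psi$ as a Boolean formula.

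Next I would verify correctness in both directions. If $\psi$ is satisfiable, any satisfying assignment yields a winning positional strategy of $E$. By Lemma~\ref{lem:winning}, every $\bm{t}\in T_{\classP,G,\bm{\alpha},\Pos}$ then satisfies $E\in\Win(\bm{\alpha},\bm{t})$, so $\Out(\bm{t})\in\psi$ and hence $\Out(\bm{t})\notin O$. Moreover $T_{\classP,G,\bm{\alpha},\Pos}$ is non-empty: any winning strategy is trivially an NE, and, as noted in the footnote just after Problem~\ref{prob:VPCKR}, every NE lies in $T_{\classP,G,\bm{\alpha},\Pos}$ via the single-world epistemic model. This yields $\Tpl{G,\bm{\alpha},O}\notin\VPCKR_{\classP,\Pos}$. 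Conversely, if $\psi$ is unsatisfiable, no positional strategy of $E$ produces an outcome satisfying $\psi$, so every $\bm{t}\in\Pos$, and in particular every $\bm{t}\in T_{\classP,G,\bm{\alpha},\Pos}$, satisfies $\Out(\bm{t})\in\neg\psi=O$, giving $\Tpl{G,\bm{\alpha},O}\in\VPCKR_{\classP,\Pos}$.

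The construction is clearly polynomial in $|\psi|$, so this is a polynomial-time many-one reduction from $\mathsf{UNSAT}$, proving coNP-hardness. I expect no real obstacle: the heavy lifting is done by Lemma~\ref{lem:winning}, which already forces every CKR-compatible profile to coincide with a winning strategy whenever one exists, so the standard trick of letting the specification be the negation of the sole player's objective suffices. The only point that deserves explicit justification is the non-emptiness of $T_{\classP,G,\bm{\alpha},\Pos}$ in the satisfiable case, which is discharged by exhibiting the one-world epistemic model carrying any NE.
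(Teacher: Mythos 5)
Your proposal is correct and is essentially the paper's own reduction: the same single-player cyclic variable-gadget arena, the same choice of objective $\psi$ and specification $\neg\psi$, and the same single-world epistemic model to witness membership in $T_{\classP,G,\bm{\alpha},\Pos}$ in the satisfiable case. The only cosmetic difference is that you route the satisfiable direction through Lemma~\ref{lem:winning} plus the NE footnote, whereas the paper directly places the satisfying strategy profile into $T_{\classP,G,\bm{\alpha},\Pos}$ via the one-world model; both arguments are sound.
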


\begin{proof}
    We reduce $\overline{\SAT}$, the complement of the satisfiability problem,
    to $\VPCKR_{\classP,\Pos}$.
    Let $\varphi$ be a Boolean formula given as an instance of $\OL{\SAT}$, where
    $x_1,\dots x_n$ are variables of $\varphi$.
    From $\varphi$, we construct an instance
    $\Tpl{G,\balpha,O}$ of $\VPCKR_{\classP,\Pos}$
    where $G=(\{p\},V,(V),v_1,\Delta)$ as follows.
    \begin{align*}
         V &= \{ v_1,\dots,v_n,\, x_1,\dots,x_n,\, \OL{x_1},\dots,\OL{x_n} \}, \\
         \Delta &= 
			\{ (v_i,u) \mid 1\leq i \leq n,\, u\in \{x_i,\OL{x_i}\}\} \\
            &\:{}\cup
			\{ (u,v_{i+1}) \mid 1\leq i < n,\, u\in \{x_i,\OL{x_i}\}\} \\
            &\:{}\cup
			\{ (u,v_1) \mid u\in \{x_i,\OL{x_i}\}\}, \\
         \balpha &=(O_p)\text{ where }O_p = \varphi,\, \text{ and }\\
		 O & =\neg\varphi.
    \end{align*}
    We regard a Boolean formula $\varphi$ as a Muller objective
    and consider every strategy profile $\bm{s}\in\Pos$ over positional strategies
    as a truth assignment to the variables in $\varphi$ in the same way as
    the proof of Lemma~\ref{lem:VPNash_Pos_hard}.

    We show that $\varphi \in \OL{\SAT} \iff \Tpl{G,\balpha,O} \in \VPCKR_{\classP,\Pos}$.

    ($\Longrightarrow$) \
    Assume $\varphi \in \OL{\SAT}$.
    Because 
    $\varphi$ is unsatisfiable,
    any strategy profile $\bm{s} \in \Pos$ satisfies $\out(\bm{s}) \in O$ ($=\neg \varphi$).
    Therefore, $\forall \bm{t} \in T_{\classP,G,\bm{\alpha},\Pos}.\ \out(\bm{t}) \in O$ holds.

    ($\Longleftarrow$) \
    Assume $\forall \bm{t}\in T_{\classP,G,\balpha,\Pos}.\ \out(\bm{t})\in O$.
    Note that $\varphi \in \OL{\SAT}$ is equivalent to $\forall \bm{s} \in \Pos.\ \out(\bm{s}) \in O$.
    We show this by contradiction.
    Suppose $\bm{s} \in \Pos$ and $\out(\bm{s}) \notin O$.
    Consider an epistemic model
    $M=(\{w\},(R_p)_{p\in P},\bm{\sigma})$
    where $R_p= \{ (w,w) \}$ and $\bm{\sigma}(w) = \bm{s}$.
    Because $\out(\bm{s})\notin O$, player $p$ wins in world~$w$.
    It is easy to see that $w \in \calCK \RAT$ because of the structure of $M$.
    Therefore, $\bm{s}\in T_{\classP,G,\bm{\alpha},\Pos}$.
    By the outer assumption, $\out(\bm{s})\in O$,
    which contradicts the inner assumption $\out(\bm{s}) \notin O$.
\qed
\end{proof}
	\begin{theorem}\label{th:VPCKR_P}
    $\VPCKR_{\classP,\Pos}$ is in $\classPi_2$.
\end{theorem}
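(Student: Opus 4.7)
The plan is to show that the complement $\OL{\VPCKR_{\classP,\Pos}}$ is in $\classSigma_2 = \NP^{\NP}$, from which the theorem follows. An instance $\Tpl{G,\balpha,O}$ lies in the complement iff there is some $\bm{t}\in T_{\classP,G,\balpha,\Pos}$ with $\Out(\bm{t})\notin O$, i.e., iff there is a polynomial-size epistemic model $M=(W,(R_p)_{p\in P},\bsigma)$ and a world $w\in W$ such that $w\in\calCK\RAT_{G,\balpha,M,\Pos}$ and $\Out(\bsigma(w))\notin O$.

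The outer $\NP$ step will guess $M$ and $w$; since the sizes of $W$, each $R_p$, and each positional strategy $\bsigma(w')$ are polynomially bounded in the size of $G$, this is a polynomial-length guess. Then one verifies in polynomial time that (i) $M$ is a valid epistemic model (each $R_p$ is an equivalence relation and $\bsigma_p$ is constant on $R_p$-classes), and (ii) $\Out(\bsigma(w))\notin O$, using Lemma~\ref{lem:sver_pos}. It remains to check $w\in\calCK\RAT_{G,\balpha,M,\Pos}$, which by the equivalent definition means that for every $w'\in R^+(w)\cup\{w\}$ and every player $p\in P$, the player $p$ is rational in $w'$. The relation $R^+$ (transitive closure of $\bigcup_p R_p$) can be computed in polynomial time.

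The main technical step is to realize this last check as a single $\coNP$ query used as an oracle. Its complement, \emph{``there exist $w'\in R^+(w)\cup\{w\}$, $p\in P$, and a positional strategy $s_p\in\Pos_p$ witnessing that $p$ is not rational in $w'$''}, is in $\NP$: guess $w'$, $p$, and $s_p$ (all of polynomial size), then verify conditions~(\ref{align:rat1}) and (\ref{align:rat2}) by iterating over the polynomially many $w''\in R_p(w')$ and, for each, computing $\Win(\balpha,\bsigma(w''))$ and $\Win(\balpha,\bsigma(w'')[p\mapsto s_p])$ via Lemma~\ref{lem:sver_pos}. Hence checking $w\in\calCK\RAT_{G,\balpha,M,\Pos}$ is in $\coNP$, and the overall procedure runs in $\NP^{\coNP}=\NP^{\NP}=\classSigma_2$, giving $\VPCKR_{\classP,\Pos}\in\classPi_2$.

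The only subtlety I foresee is bookkeeping in the oracle reduction: one has to be careful that the outer nondeterministic machine commits to $M$ and $w$ before querying the oracle, so that the oracle's input encoding $(G,\balpha,M,w)$ is fixed and polynomial. Because the polynomial bound on $|M|$ is given as part of the problem definition (via $M_{\classP}(G,S)$), this is immediate. Everything else—equivalence-relation checking, transitive closure, and the inner applications of Lemma~\ref{lem:sver_pos}—is routine polynomial-time work.
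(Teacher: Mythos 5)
Your proposal is correct and follows essentially the same route as the paper's proof: nondeterministically guess a polynomial-size epistemic model $M$ and a world $w$, use a $\coNP$ oracle (equivalently, an $\NP$ oracle) to decide $w\in\calCK\RAT_{G,\balpha,M,\Pos}$ by guessing a world in $R^{+}(w)$, a player, and a positional strategy violating rationality, and check $\Out(\bsigma(w))\notin O$ via Lemma~\ref{lem:sver_pos}, yielding $\OL{\VPCKR_{\classP,\Pos}}\in\NP^{\NP}$. Your added remarks on validating that $M$ is a well-formed epistemic model and on computing $R^{+}$ are sound details the paper leaves implicit.
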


\begin{proof}
    In a similar way to the proof of Lemma~\ref{lem:VPNash_Pos},
    we can construct a non-deterministic
    polynomial-time oracle Turing machine for deciding
    $\OL{\VPCKR_{\classP,\Pos}}$.
  %
  %
    Deciding whether $w\in\calCK\RAT_{G,\bm{\alpha},M,\Pos}$ for a given game,
    an epistemic model $M=(W,(R_p)_{p\in P},\bm{\sigma})$ and a world $w\in W$ is in $\coNP$\@:
    Guess a world $u\in R^+(w)$, a player $p$ and a positional strategy $s_p\in\Pos_p$ of $p$,
    and check 
    both conditions (\ref{align:rat1}) and (\ref{align:rat2}) in Definition~\ref{def:rat}
    substituting $u$ for $w$.
  %
    A non-deterministic
    polynomial-time oracle Turing machine for deciding
    whether $\Tpl{G,\balpha,O}\notin\VPCKR_{\classP,\Pos}$
    is as follows:
    Guess an epistemic model $M=(W,(R_p)_{p\in P},\bm{\sigma}) \in M_{\classP}(G,S)$ and
    a world $w\in W$, and check whether
    $w\in \calCK \RAT_{G,\bm{\alpha},M,\Pos}$ and
    $\out(\bm{\sigma}(w)) \notin O$.
    Note that because the size of $M$ is not greater than some polynomial of the size of $G$,
    the construction of $M$ takes only polynomial time.
    Therefore, $\VPCKR_{\classP,\Pos}$ is in
    $\coNP^{\NP} = \classPi_2$.
  \qed
\end{proof}


For $\VPCKR_{\Pos}$,
we can show that it is $\classSigma_2$-hard
and in $\coNEXP^{\NP}$
as follows.

\begin{theorem}
  $\VPCKR_{\Pos}$ is $\classSigma_2$-hard.
\end{theorem}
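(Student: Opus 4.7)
The plan is to prove $\classSigma_2$-hardness by a polynomial-time many-one reduction from $\eaSAT$, which is $\classSigma_2$-complete. Given an instance $\varphi = \exists x_1 \ldots x_n \forall y_1 \ldots y_m.\ \psi(\vec{x},\vec{y})$, we construct an instance $\langle G,\balpha,O\rangle$ of $\VPCKR_{\Pos}$ such that $\varphi \in \eaSAT$ iff $\langle G,\balpha,O\rangle \in \VPCKR_{\Pos}$. Equivalently, it is enough to show that $\varphi \notin \eaSAT$ (i.e.\ $\forall\vec{x}\exists\vec{y}.\,\neg\psi$ holds) iff there exist an epistemic model $M$ and a world $w$ such that $w\in\calCK\RAT_{G,\balpha,M,\Pos}$ and $\out(\bsigma(w))\notin O$.

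First, I would adapt the game-arena construction from Lemma~\ref{lem:VPNash_Pos_hard}: one player's positional strategy encodes an assignment to $\vec{x}$, another's encodes an assignment to $\vec{y}$, and the Muller specification $O$ is (essentially) $\psi$. The objectives $\balpha$ would be chosen so that in a single-world epistemic model the CKR condition reduces to the same ``no profitable unilateral deviation'' structure that in the Nash setting gave $\classPi_2$-hardness. The crucial new ingredient, however, is that $\VPCKR_{\Pos}$ imposes no bound on the size of epistemic models, so the reduction is free to use a model with up to $2^n$ worlds, one per candidate $\vec{x}$-assignment. This extra freedom---absent in $\VPCKR_{\classP,\Pos}$, whose lower bound is only $\coNP$---is what must be harnessed to climb one level in the polynomial hierarchy, since the $\calCK\RAT$ quantifier ranges implicitly over all worlds reachable by $R^+$ and can therefore encode the outer $\exists\vec{x}$ of $\eaSAT$.

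For the reverse direction ($\varphi\notin\eaSAT$ implies $\langle G,\balpha,O\rangle\notin\VPCKR_{\Pos}$), I would use the assumption $\forall\vec{x}\exists\vec{y}.\,\neg\psi$ to build an epistemic model whose worlds $\{w_{\vec{x}}\}_{\vec{x}\in\{0,1\}^n}$ are indexed by $\vec{x}$-assignments, setting $\bsigma(w_{\vec{x}})$ to encode $\vec{x}$ together with a chosen witness $\vec{y}_{\vec{x}}$ satisfying $\neg\psi(\vec{x},\vec{y}_{\vec{x}})$; the equivalence relations $(R_p)_{p\in P}$ would be designed so that each player's positional strategy is constant on her equivalence class and every $R^+$-reachable world is rational according to conditions (\ref{align:rat1}) and (\ref{align:rat2}) of Definition~\ref{def:rat}. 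For the forward direction ($\varphi\in\eaSAT$), I would fix a witness $\vec{x}^*$ with $\forall\vec{y}.\,\psi(\vec{x}^*,\vec{y})$ and argue that in any $M$ and $w$ with $w\in\calCK\RAT$, the rationality of the $\vec{y}$-player at every $R^+$-reachable world forces the outcome to satisfy $\psi$, so $\out(\bsigma(w))\in O$.

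The main obstacle is the joint design of the game arena, the objectives, and the epistemic structure so that the rationality conditions align precisely with the quantifier alternation of $\eaSAT$: both conditions (\ref{align:rat1}) and (\ref{align:rat2}) must be verifiable at every $R^+$-reachable world in the reverse-direction model despite the constraint that positional strategies are constant across each equivalence class, and in the forward direction no alternative epistemic model---however large---may smuggle in a $\bm{t}\in T_{G,\balpha,\Pos}$ with $\out(\bm{t})\notin O$. Getting the information partitions, the per-player objectives, and the specification $O$ to cooperate so that CKR is achievable exactly when $\forall\vec{x}\exists\vec{y}.\,\neg\psi$ holds is the technical heart of the proof.
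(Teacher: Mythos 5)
Your reduction source ($\eaSAT$) and game arena (reusing the construction of Lemma~\ref{lem:VPNash_Pos_hard}) match the paper's, and your instinct that the unbounded model size is what lifts the lower bound above that of $\VPCKR_{\classP,\Pos}$ is sound. However, both directions of your equivalence are left resting on constructions that, as sketched, do not go through, and the two ideas the paper actually uses are absent. For the forward direction, the paper fixes the objectives so that the existentially quantified player has objective $\psi$ and hence a \emph{winning strategy} exactly when $\varphi\in\eaSAT$, and then applies Lemma~\ref{lem:winning} (a player with a winning strategy wins under every profile in $T_{G,\balpha,\Pos}$). You instead attribute the forcing of $\psi$ to the rationality of the universally quantified player, without fixing her objective; with the natural zero-sum choice her objective is $\neg\psi$ or she has no winning strategy, and no argument is given for why rationality at every $R^+$-reachable world pins down the outcome. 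Some form of Lemma~\ref{lem:winning} is needed here and is not in your proposal.

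The more serious gap is in the reverse direction. Your hand-built model with one world $w_{\vec{x}}$ per $\vec{x}$-assignment, each carrying a witness $\vec{y}_{\vec{x}}$ with $\neg\psi(\vec{x},\vec{y}_{\vec{x}})$, will generally fail the rationality test for the player whose objective is $\psi$: she loses in every world of her information set, so condition~(\ref{align:rat1}) of Definition~\ref{def:rat} is vacuously satisfied by \emph{every} deviation, and condition~(\ref{align:rat2}) is then met by any deviation $t$ for which some $\vec{x}'$ in her information set satisfies $\psi(\vec{x}',t)$ --- which typically exists unless $\psi$ is unsatisfiable on that set. So CKR need not hold in your model, and the whole direction collapses. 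The paper sidesteps exactly this difficulty by not constructing a model at all: it splits into the case where $A$ has a winning strategy (then an NE witnesses membership in $T_{G,\balpha,\Pos}$ with a bad outcome) and the case where neither player has a winning strategy, where it invokes the iterated deletion of inferior profiles (IDIP) and the characterization $T^{\infty}=T_{G,\balpha,\Pos}$ from Bonanno to show that some profile with $\Out(\bolds)\notin O$ survives. This characterization, or an equivalent device for certifying that a losing profile is supported by \emph{some} CKR model, is the missing idea; without it your ``technical heart'' remains unproved.
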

\begin{proof}
  We reduce $\eaSAT$ to $\VPCKR_{\Pos}$.
  Let $\varphi =
    \exists y_1\dots y_m\, \forall x_1\dots x_n\,\psi$
  be an instance of $\eaSAT$.
  From $\varphi$, we construct an instance
  $\Tpl{G,\balpha,O}$ of $\VPCKR_{\Pos}$
  where
  $G$ is the same game arena as in
  the proof of Lemma~\ref{lem:VPNash_Pos_hard}
  and $O=\psi$ and
  $\balpha$ consists of
  $O_A=\neg\psi$ and $O_E=\psi$.
  As described in the proof of Lemma~\ref{lem:VPNash_Pos_hard},
  there is a one-to-one correspondence between
  the strategy profiles in $\Pos$ and
  the truth assignments to the variables in~$\varphi$,
  and player $E$ wins under a strategy profile $\bolds$
  if and only if
  $\psi$ is true under the truth assignment corresponding to~$\bolds$.
  Therefore
  by the structure of~$\varphi$,
  $\varphi$ is true if and only if
  $E$ has a winning strategy.

  We show that $\varphi \in \eaSAT \iff
  \Tpl{G,\balpha,O} \in \VPCKR_{\Pos}$.

  ($\Longrightarrow$) \
  As mentioned above,
  $\varphi \in \eaSAT$ implies
  $E$ has a winning strategy.
  By Lemma~\ref{lem:winning},
  $E$ wins for every $\bm{t}\in T_{G,\balpha,\Pos}$\,;
  i.e.,
  $\forall \bm{t}\in T_{G,\balpha,\Pos}.\,
   \Out(\bm{t})\in O_E$ ($=O$).

  ($\Longleftarrow$) \
  Assume that
  $\varphi\notin\eaSAT$;
  that is, $E$ has no winning strategy.
  If $A$ has a winning strategy $s_{\!A}$,
  then any strategy profile $\bolds\in\Pos$
  where $A$ takes $s_{\!A}$ is
  an NE\@.
  Since every NE belongs to $T_{G,\balpha,\Pos}$
  and $A$ wins under $\bolds$,
  it holds that
  $\bolds\in T_{G,\balpha,\Pos}$ and
  $\Out(\bm{t})\notin O$.
  Therefore, $\Tpl{G,\balpha,O}\notin \VPCKR_{\Pos}$.

  Consider the case where
  neither $A$ nor $B$ has a winning strategy.
  Let $T^{\infty}$ be the subset of strategy profiles
  obtained by
  the following iterative procedure
  called the \emph{iterated deletion of inferior
  profiles} (IDIP)~\cite[Def.~9.8]{Bo15}:
  For a subset $X$ of strategy profiles and
  its member $\bolds=(s_p)_{p\in P}\in X$,
  $\bolds$ is \emph{inferior relative to} $X$ if
  there exist a player $p$ and $p$'s strategy $t_p\in\Pos_p$
  such that
  \begin{enumerate}
  \item $\Out(\bolds)\notin O_p \land
         \Out(\Deriv{\bolds}{p}{t_p})\in O_p$, and
  \item $\Out(\bolds')\in O_p \implies
         \Out(\Deriv{\bolds'}{p}{t_p})\in O_p$
        for every $\bolds'=(s'_p)_{p\in P}\in X$ such that
        $s'_p = s_p$.
  \end{enumerate}
  Let $T^0=\Pos$.
  $T^{i+1}$ is the set obtained from $T^i$ by removing
  all strategy profiles inferior relative to~$T^i$.
  We repeat this construction until
  there is no inferior strategy profiles.
%
  Since $\Pos$ is finite,
  this procedure always halts.
  Moreover,
  since neither $A$ nor $B$ has a winning strategy
  and the game is zero-sum,
  there must exist a strategy profile $\bolds\in\Pos$
  remaining in $T^{\infty}$ that satisfies
  $\Out(\bolds)\notin O_E$ ($=O$).
  As shown in \cite[Proposition~9.4~(B)]{Bo15},
  $\bolds\in T^{\infty}$ also belongs to
  $T_{G,\balpha,\Pos}$.
  Therefore, $\Tpl{G,\balpha,O}\notin \VPCKR_{\Pos}$.
\qed
\end{proof}

\begin{theorem}
    $\VPCKR_{\Pos}$ is in $\coNEXP^{\NP}$.
\end{theorem}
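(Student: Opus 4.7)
The plan is to show $\OL{\VPCKR_{\Pos}}\in\mathrm{NEXP}^{\NP}$ in direct analogy with the proof of Theorem~\ref{th:VPCKR_P}: nondeterministically guess an epistemic model $M$ and a world $w$, verify $\Out(\bsigma(w))\notin O$ in polynomial time, and use an NP oracle to decide $w\in\calCK\RAT_{G,\balpha,M,\Pos}$. The new ingredient is a size bound on $M$: I plan to show that for every $\bm{t}\in T_{G,\balpha,\Pos}$, some witnessing epistemic model has size at most $|\Pos|\le|\Delta|^{|V|}=2^{O(|G|)}$, exponential in the input.

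To establish the size bound, I plan to exhibit a single canonical model $M^*=(W^*,(R^*_p)_{p\in P},\bsigma^*)$ that witnesses \emph{every} $\bm{t}\in T_{G,\balpha,\Pos}$ at once: take $W^*=T_{G,\balpha,\Pos}$, $\bsigma^*(\bm{s})=\bm{s}$, and $(\bm{s},\bm{s}')\in R^*_p \iff s_p=s'_p$. The relations $R^*_p$ are clearly equivalences respecting~(\ref{eqn:epistemic_model}), and $|W^*|\le|\Pos|=2^{O(|G|)}$, so $M^*$ has the required exponential size. The hard part will be to prove that every $\bm{s}\in W^*$ lies in $\calCK^{M^*}\RAT^{M^*}$; since all worlds of $M^*$ live in $W^*$, it suffices to show $W^*\subseteq\RAT^{M^*}$. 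I would argue this by contradiction: if some $p$ were not rational at $\bm{s}$ in $M^*$ via a deviation $t_p$, then~(\ref{align:rat2}) provides $\bm{s}'\in W^*$ with $s'_p=s_p$, $\Out(\bm{s}')\notin O_p$, and $\Out(\bm{s}'[p\mapsto t_p])\in O_p$. Since $\bm{s}'\in T_{G,\balpha,\Pos}$, fix any witnessing epistemic model $M'$ and world $v$ with $\bsigma'(v)=\bm{s}'$ and $v\in\calCK^{M'}\RAT^{M'}$. The deviation $t_p$ already satisfies~(\ref{align:rat2}) for $p$ at $v$ in $M'$ (taking $v'=v$), so rationality of $p$ at $v$ forces~(\ref{align:rat1}) to fail in $M'$: some $v'\in R'_p(v)$ has $\Out(\bsigma'(v'))\in O_p$ and $\Out(\bsigma'(v')[p\mapsto t_p])\notin O_p$. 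Transitivity of $R^+$ gives $v'\in\calCK^{M'}\RAT^{M'}$, so $\bsigma'(v')\in W^*$; and $\sigma'_p(v')=\sigma'_p(v)=s_p$ by~(\ref{eqn:epistemic_model}), so $\bsigma'(v')\in R^*_p(\bm{s})$. This world then contradicts~(\ref{align:rat1}) for $p$ at $\bm{s}$ in $M^*$, completing the proof that $M^*$ is a valid witness of exponential size.

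With the size bound established, the $\mathrm{NEXP}^{\NP}$ algorithm for $\OL{\VPCKR_{\Pos}}$ proceeds as follows: nondeterministically guess an epistemic model $M=(W,(R_p)_{p\in P},\bsigma)$ of size at most $2^{O(|G|)}$ together with a world $w\in W$ within NEXP time, verify $\Out(\bsigma(w))\notin O$ in time polynomial in $|M|$ using Lemma~\ref{lem:sver_pos}, and query an NP oracle for $w\in\calCK\RAT_{G,\balpha,M,\Pos}$; its complement is in NP since a violation is witnessed by a triple $(u,p,s_p)$ with $u\in R^+(w)$ and $s_p\in\Pos_p$ jointly satisfying (\ref{align:rat1}) and (\ref{align:rat2}) at $u$, checkable in time polynomial in $|M|$. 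Therefore $\OL{\VPCKR_{\Pos}}\in\mathrm{NEXP}^{\NP}$, yielding $\VPCKR_{\Pos}\in\coNEXP^{\NP}$.
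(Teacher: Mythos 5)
Your proposal is correct, and the algorithm itself is the same as the paper's: guess an exponential-size epistemic model together with a world $w$, check $\Out(\bsigma(w))\notin O$ in deterministic exponential time via Lemma~\ref{lem:sver_pos}, and decide $w\in\calCK\RAT$ with an $\NP$ oracle. The difference lies in how the key completeness step---that exponential-size models of the form ``worlds are strategy profiles, $R_p$ identifies profiles agreeing on $p$'s component''---suffice. The paper obtains this by invoking Bonanno's result that the iterated deletion of inferior profiles (IDIP) yields $T^{\infty}=T_{G,\balpha,\Pos}$ and that every world of $M(T^{\infty})$ satisfies common knowledge of rationality, i.e., it imports the characterization from \cite[Proposition~9.4]{Bo15}. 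You instead prove directly and self-containedly that the canonical model $M^*$ built on $W^*=T_{G,\balpha,\Pos}$ itself works, by a merging-of-witnesses argument: a putative irrational deviation $t_p$ at $\bm{s}\in W^*$ is traced back through condition~(\ref{align:rat2}) to some $\bm{s}'\in W^*$, whose own witnessing model $M'$ must (by rationality at its witnessing world and closure of $\calCK$ under $R^+$) contain a world refuting condition~(\ref{align:rat1}) whose profile again lies in $W^*$ and in $R^*_p(\bm{s})$, contradicting~(\ref{align:rat1}) in $M^*$. This argument is sound---the use of reflexivity to get $v\in\RAT^{M'}$ from $v\in\calCK\RAT^{M'}$, transitivity of $R^+$ to keep $v'$ inside $\calCK\RAT^{M'}$, and condition~(\ref{eqn:epistemic_model}) to keep the $p$-component fixed are all exactly the properties needed---and it buys a proof that does not depend on the IDIP machinery, at the cost of not exposing the connection to that standard epistemic-game-theoretic characterization which the paper also reuses in its $\classSigma_2$-hardness proof.
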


\begin{proof}
    For a subset $X$ of $\Pos$,
    let $M(X)=(W,(R_p)_{p\in P},\bsigma)$
    be an epistemic model
    where $W=X$ and
    $\bsigma(\bolds)=\bolds$ for all $\bolds\in X$,
    and
    $R_p = \{(w_1,w_2)\mid \sigma_p(w_1)=\sigma_p(w_2)\}$ for
    $p\in P$.
    Using this construction of an epistemic model
    from a subset of strategy profiles,
    we can construct a non-deterministic
    exponential-time
    oracle Turing machine for deciding
    $\OL{\VPCKR_{\Pos}}$ as follows:
    Let $\Tpl{G,\balpha,O}$ be a given instance of $\VPCKR_{\Pos}$.
    Guess a subset $T\subseteq\Pos$ and $\boldt\in T$,
    and construct $M(T)$.
    Then, check whether $\boldt$ (as a world of $M(T)$)
    satisfies $\boldt\in\calCK\RAT_{G,\balpha,M(T),\Pos}$
    and $\Out(\boldt)\notin O$.

    As mentioned in the proof of Theorem~\ref{th:VPCKR_P},
    deciding whether $\boldt\in\calCK\RAT$ for given
    $M(T)$ and $\boldt$ is in $\coNP$.
    The above Turing machine uses an $\NP$ oracle to
    decide $\boldt\in\calCK\RAT$.
    Since the size of $T$ and $M(T)$ is exponential to the size of
    $G$ in general,
    guessing $T$ and constructing $M(T)$ take exponential time.
    Deciding whether $\Out(\boldt)\notin O$ is in $\classP$
    by Lemma~\ref{lem:sver_pos} and is not dominant.

    If the answer of the Turing machine is \emph{yes},
    then obviously $\boldt\in T_{G,\balpha,\Pos}$ and thus
    $\Tpl{G,\balpha,O}\in \OL{\VPCKR_{\Pos}}$.
    On the other hand,
    \cite[Proposition~9.4]{Bo15} and its proof say that
    the subset $T^{\infty}$ of strategy profiles
    obtained by the IDIP procedure~\cite[Def.~9.8]{Bo15}
    satisfies $T^{\infty}=T_{G,\balpha,\Pos}$, and
    every world $w$ of the epistemic model $M(T^{\infty})$
    satisfies $w\in\calCK \RAT_{G,\balpha,M(T^{\infty}),\Pos}$.
    Hence, we do not need to consider epistemic models
    other than $M(T)$ for $T\subseteq\Pos$.
    Therefore,
    if the answer of the above Turing machine is \emph{no},
    then we can conclude that
    $\Tpl{G,\balpha,O}\in \VPCKR_{\Pos}$.
  \qed
\end{proof}

\section{Conclusion}\label{sec:conclusion}
	We introduced an epistemic approach to rational verification.
We defined rational verification problems
$\VPCKR_S$, $\VPCKR_{\classP,S}$ and $\VPNash_S$ based on
common knowledge of rationality and Nash equilibrium.
The problem $\VPCKR_{\classP,S}$ is a variant of $\VPCKR_S$
where the size of an epistemic model is not greater than $p(n)$ for some polynomial $p$ and
the size $n$ of a given game arena.
The problem $\VPNash_S$ asks whether
each Nash equilibrium satisfies given specification.
Then, we analyzed the complexities of these problems shown.
Table~\ref{tab:complexity} summarizes the complexities of the problems.

In this paper, we consider only for the $S=\Pos$ case.
Analysing above problems for other $S$ such as the class of the finite memory strategies
is future work.
Our epistemic model based on KT5 Kripke frame is the knowledge based setting.
Hence, if a player knows $E$ in a world $w$, then $E$ actually occurs in $w$.
There is another setting called belief based setting.
In belief based setting, even if a player knows $E$ in $w$, $E$ doesn't necessarily occur in $w$.
Studying belief based setting is also future work.

\end{document}